\def\draft{0} 
\newcommand{\Snote}[1]{\textbf{[Salil's Note: #1]}}
\newcommand{\Ynote}[1]{\textbf{[Yakir's Note: #1]}}
\newcommand{\Snote}[1]{}
\newcommand{\Ynote}[1]{}
\newtheorem{theorem}{Theorem}
\newtheorem{lemma}[theorem]{Lemma}
\newtheorem*{lemma*}{Lemma}
\newtheorem{claim}[theorem]{Claim}
\newtheorem*{claim*}{Claim}
\newtheorem{proposition}[theorem]{Proposition}
\theoremstyle{definition}
\newtheorem{definition}[theorem]{Definition}
\newtheorem{question}{Open Question}
\theoremstyle{remark}
\numberwithin{equation}{section}
\numberwithin{theorem}{section}
\newcommand{\Z}[1]{\mathbb{Z}/#1\mathbb{Z}}
\newcommand{\N}{\mathbb{N}}
\newcommand{\B}{\{0,1\}}
\DeclareMathOperator*{\Exp}{E}
\newcommand{\ep}{\varepsilon}
\newcommand{\ii}{^\prime}
\newcommand{\pr}[1]{#1}
\newcommand{\Un}[1]{\pr{U_{#1}}}
\newcommand{\dpr}[2]{\left(#1, #2 \right)}
\newcommand{\subsets}[2]{\genfrac{\{}{\}}{0pt}{}{#1}{#2}}
\title{On Extractors and Exposure-Resilient Functions for Sublogarithmic Entropy\thanks{Some of these results previously
appeared in the first author's undergraduate thesis~\citep{Reshef}.}
\ifnum\draft=1\\{\small \sc Working Draft: Please Do Not Distribute}\fi
}
\author{Yakir Reshef\thanks{Department of Mathematics, Harvard College. \href{mailto:yreshef@post.harvard.edu}{\nolinkurl{yreshef@post.harvard.edu}}.} \and
Salil Vadhan\thanks{School of Engineering and Applied Science, Harvard University, 33 Oxford Street, Cambridge, MA 02138. \href{mailto:salil@seas.harvard.edu}{\nolinkurl{salil@seas.harvard.edu}}.
\url{http://seas.harvard.edu/\~salil}.  Supported by US-Israel BSF grant 2006060
and NSF grant CNS-0831289.}}
\begin{document}

\maketitle

\begin{abstract}
We study deterministic extractors for oblivious bit-fixing sources (a.k.a. resilient functions) and exposure-resilient functions with small min-entropy: of the function's $n$ input bits, $k\ll n$ bits are uniformly random and unknown to the adversary.

We simplify and improve an explicit construction of extractors for bit-fixing sources with sublogarithmic $k$ due to Kamp and Zuckerman (SICOMP 2006), achieving error exponentially small in $k$ rather than polynomially small in $k$. Our main result is that when $k$ is sublogarithmic in $n$, the short output length of this construction ($O(\log{k})$ output bits) is optimal for extractors computable by a large class of space-bounded streaming algorithms.

Next, we show that a random function is an extractor for oblivious bit-fixing sources with high probability if and only if $k$ is superlogarithmic in $n$, suggesting that our main result may apply more generally. In contrast, we show that a random function is a static (resp. adaptive) exposure-resilient function with high probability even if $k$ is as small as a constant (resp. $\log\log n$). No explicit exposure-resilient functions achieving these parameters are known.

\end{abstract}

\vspace{0.15in}

\noindent {\bf Keywords:} pseudorandomness, exposure-resilient function, randomness extractor, bit-fixing source

\section{Introduction}

Randomness extractors are functions that extract almost-uniform bits from weak sources of randomness (which may have biases and/or correlations).  Extractors can be used for simulating randomized algorithms and protocols with weak sources of randomness, have close connections to many other ``pseudorandom objects'' (such as expander graphs and error-correcting codes), and have a variety of other applications in theoretical computer science.

The most extensively studied type of extractor is the {\em seeded extractor}, introduced
by Nisan and Zuckerman~\citep{NisanZu96}.  These extractors are given as additional input a small
``seed'' of truly random bits to use as a catalyst for the randomness extraction, and this allows for extracting almost-uniform bits from very unstructured sources, where all we know is a lower bound on the min-entropy.  In many applications, such as randomized algorithms, the need for truly random bits can be eliminated by trying all possible seeds and combining the results (e.g. by majority vote).
However, prior to the Nisan--Zuckerman notion,
there was a substantial interest in {\em deterministic extractors} (which have no random seed) for restricted classes of sources.  Over the past decade, there has been a resurgence in the study of deterministic extractors, motivated by settings where enumerating all possible seeds does not work (e.g. distributed protocols) and by other
applications in cryptography.

In this paper, we study one of the most basic models: an {\em oblivious bit-fixing source (OBFS)} is an $n$-bit source where some $n-k$ bits are fixed arbitrarily and the remaining $k$ bits are uniformly random.  Deterministic extractors for OBFSs, also known as {\em resilient functions (RFs)}, were first studied in the mid-80's, motivated by cryptographic applications~\citep{Vazirani87a,BennettBrRo88,ChorGoHaFrRuSm85}.  A more relaxed notion is that of an {\em exposure-resilient function (ERF)}, introduced in 2000
by Canetti et al.~\citep{CanettiEtAl}.  Here
all $n$ bits of the source are chosen uniformly at random, but $n-k$ of them are seen by an adversary; an ERF should extract bits that are almost-uniform even conditioned on what the adversary sees.  ERFs come in two types: {\em static} ERFs, where the adversary decides which $n-k$ bits to see in advance, and {\em adaptive} ERFs, where the adversary reads the $n-k$ bits adaptively.
In recent years, there has been substantial progress in giving explicit constructions of both
RFs and ERFs~\citep{CanettiEtAl,DodisSahaiSmith,KampZuckerman,GabizonRazShaltiel}.

In this paper, we focus on the case when $k$, the number of random bits unknown to the adversary, is very small, e.g. $k<\log n$.   While this case is not directly motivated by applications, it is interesting from a theoretical perspective for a couple of reasons:
\begin{itemize}
\item For many other natural classes of sources (several independent sources~\citep{ChorGoldreich88}, samplable sources~\citep{TrevisanVa00}, and affine sources~\citep{BarakKiShSuWi05}), at least logarithmic min-entropy is necessary for extraction.\footnote{For the case of 2 independent sources, the need for logarithmic min-entropy is proven in \citep{ChorGoldreich88}.  For sources samplable by circuits of size $s=n^2$, it can be shown by noting that the uniform distribution on any $2^k$ elements of $\{0,1\}^{k+1}\circ 0^{n-k-1}$ is samplable by a circuit of size $O(n\cdot 2^k)$ (and we can pick $2^k$ elements on which the first bit of the extractor is constant).  For affine sources, it can be shown by analyzing the $k$-th Gowers norm of the set of inputs on which the first bit of the extractor is constant (as pointed out to us by Ben Green).}

\item This is a rare case where a random function is {\em not} an optimal extractor.  For example, the parity function extracts one completely unbiased bit from any bit-fixing source with $k=1$ random bits, but we show that a random function will fail to extract from some such source with high probability.
\end{itemize}

Our first results concern explicit constructions of extractors for OBFS with $k$ sublogarithmic in $n$.

\begin{itemize}
\item We simplify and improve an explicit construction of extractors for OBFSs with small $k$ by Kamp and Zuckerman~\citep{KampZuckerman}.  In particular, the error parameter of our construction can be exponentially small in $k$, whereas the Kamp--Zuckerman construction achieves error that is polynomially small in $k$.  Our extractor (like that of \citep{KampZuckerman}) extracts only $\Theta(\log k)$ almost-uniform bits, in contrast to extractors for superlogarithmic $k$, which can extract nearly $k$ bits.

\item We prove that, when $k$ is sublogarithmic, the $\Theta(\log k)$ output length of our extractor is optimal for extractors for OBFSs computable by space-bounded streaming algorithms with a certain ``forgetlessness'' property. The class of streaming algorithms we analyze includes our construction as well as many natural random-walk based constructions. This is our main result.
\end{itemize}

Next, we investigate properties of random functions as extractors for OBFS's and find that $k \approx \log{n}$ appears to be a critical point for extractors for OBFSs in this setting as well.  Specifically, we show that:
\begin{itemize}
\item A random function is an extractor for OBFSs (with high probability) {\em if and only if} $k$ is at least roughly $\log n$.

\item In contrast, for the more relaxed concept of exposure-resilient functions, random functions suffice even for sublogarithmic $k$.  For static ERFs, $k$ can be as small as a constant, and for adaptive ERFs, $k$ can be as small as $\log\log n$.
\end{itemize}
All of the results concerning random functions yield resilient/exposure-resilient functions that output nearly $k$ almost-uniform bits.

\section{Preliminaries}

Throughout, we will use the convention that a lowercase number (e.g. $n$) implicitly defines a corresponding capital number ($N$) as its exponentiation with base $2$ (i.e. $N=2^n$).

\begin{definition}[Statistical Distance]
Let $\pr{X}$ and $\pr{Y}$ be two random variables taking values in a set $S$.  The \em statistical distance \em $\Delta(\pr{X},\pr{Y})$ between $\pr{X}$ and $\pr{Y}$ is
$$\Delta \left( \pr{X},\pr{Y} \right) = \max_{T \subset S}\left| \Pr \left[ \pr{X} \in T \right] - \Pr \left[ \pr{Y} \in T \right]\right| = \frac{1}{2}\sum_{w \in S} \left| \Pr \left[ \pr{X} = w \right] - \Pr \left[ \pr{Y} = w \right] \right|$$
\end{definition}
We will write $\pr{X} \approx_\ep \pr{Y}$ to mean $\Delta(\pr{X},\pr{Y}) \leq \ep$, and we will use $\Un{n}$ to denote the uniform distribution on $\B^n$. When $\Un{n}$ appears twice in the same set of parentheses, it will denote the same random variable. For example, a string chosen from the distribution $\dpr{\Un{n}}{\Un{n}}$ will always be of the form $w \circ w$ for some $w \in \B^n$. Note that $\dpr{\Un{n}}{\Un{m}}$ still equals $\Un{n+m}$.

\begin{definition}[Oblivious Symbol-Fixing Source]
An $(n,k,d)$ \em oblivious symbol-fixing source (OSFS) \em $\pr{X}$ is a source consisting of $n$ symbols, each drawn from $[d]$, of which all but $k$ are fixed and the rest are chosen independently and uniformly at random.
\end{definition}

\begin{definition}[Oblivious Bit-Fixing Source]
An $(n,k)$ \em oblivious bit-fixing source (OBFS) \em is an $(n,k,2)$ oblivious symbol-fixing source.
\end{definition}

We will use $\subsets{n}{\ell}$ to denote the set $\{L \subset [n] \colon |L| = \ell\}$ and, given some $L \in \subsets{n}{\ell}$ and a string $a \in \B^{\ell}$, we will write $L^{a,n}$ to denote the oblivious bit-fixing source that has the bits with positions in $L$ fixed to the string $a$.

\begin{definition}[Deterministic Randomness Extractor] \label{def:extractor}
Let $\mathcal{C}$ be a class of sources on $\B^n$.  A \em deterministic $\ep$-extractor \em for $\mathcal{C}$ is a function $E \colon \B^n \rightarrow \B^m$ such that for every $\pr{X} \in \mathcal{C}$ we have $E(\pr{X}) \approx_\ep \Un{m}$.
\end{definition}

Here we will focus mainly on deterministic randomness extractors for oblivious bit-fixing sources, also known as \em resilient functions \em (RFs).
\begin{definition}[Resilient Function]
\label{def:RF}
A \em $(k,\ep)$-RF \em is a function $f \colon \B^n \rightarrow \B^m$ that is a deterministic $\ep$-extractor for $(n,k)$ oblivious bit-fixing sources.
\end{definition}
We can also characterize extractors for OBFSs by their ability to fool a distinguisher: consider a computationally unbounded adversary $A$ that can set some of $f$'s input bits in advance but must allow the rest to be chosen uniformly at random. Then $f$ satisfies Definition~\ref{def:RF} if and only if $A$ is unable to distinguish between $f$'s output and the uniform distribution regardless of how $A$ changes $f$'s input.

When viewed through this lens, the notion of deterministic extraction from OBFSs has a natural relaxation obtained by restricting $A$ to only \em read \em (rather than modify) a portion of $f$'s input bits. Functions that are able to fool adversaries of this type are called \em exposure-resilient functions \em (ERFs). We define below the two simplest variants of exposure-resilient functions, which correspond to whether $A$ reads the bits of $f$'s input all at once or one at a time.

\begin{definition}[Static Exposure-Resilient Function]
A \em static $(k,\ep)$-ERF \em is a function $f \colon \B^n \rightarrow \B^m$ with the property that for every $L \in \subsets{n}{n-k}$, $f$ satisfies $\dpr{\Un{n}|_L}{f(\Un{n})} \approx_\ep \dpr{\Un{n}|_L}{\Un{m}}$.
\end{definition}

This definition can be restated in terms of average-case extraction using the following lemma, whose proof can be found in~\citep{Reshef}.
\begin{lemma}
\label{lem:ERFalternate}
A function $f \colon \B^n \rightarrow \B^m$ is a static $(k,\ep)$-ERF if and only if for every $L \in \subsets{n}{n-k}$, $f$ satisfies
$$\Exp_{a \leftarrow \Un{n-k}} {\left[ \Delta\left(f\left(L^{a,n}\right), \Un{m}\right) \right]} \leq \ep$$
\end{lemma}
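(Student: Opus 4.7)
The plan is to observe that both joint distributions $\dpr{\Un{n}|_L}{f(\Un{n})}$ and $\dpr{\Un{n}|_L}{\Un{m}}$ share the same marginal on the first coordinate (namely, $\Un{n-k}$, since the coordinates in $L$ are uniformly distributed), and then decompose the statistical distance by conditioning on the value of that first coordinate. In particular, I would first verify the standard identity that, whenever two joint distributions $(X,Y)$ and $(X,Z)$ have a common marginal $X$, one has
$$\Delta\bigl((X,Y),(X,Z)\bigr) \;=\; \Exp_{x \leftarrow X}\bigl[\,\Delta(Y \mid X = x,\; Z \mid X = x)\bigr],$$
which follows directly from the $\ell_1$ formula for statistical distance by pulling the sum over $x$ outside the absolute value (since $\Pr[X=x]$ factors out of $|\Pr[X=x,Y=w]-\Pr[X=x,Z=w]|$).

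Next, I would identify the two relevant conditional distributions. Conditioned on $\Un{n}|_L = a$, the remaining $k$ bits (those outside $L$) are still uniform and independent, so $\Un{n}$ becomes precisely the source $L^{a,n}$; hence $f(\Un{n})$ conditioned on $\Un{n}|_L = a$ is distributed as $f(L^{a,n})$. On the other side, $\Un{m}$ is independent of $\Un{n}|_L$, so its conditional distribution given $\Un{n}|_L = a$ is still $\Un{m}$.

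Plugging these conditional distributions into the identity, and observing that the marginal $X = \Un{n}|_L$ is exactly $\Un{n-k}$, yields
$$\Delta\bigl(\dpr{\Un{n}|_L}{f(\Un{n})},\;\dpr{\Un{n}|_L}{\Un{m}}\bigr) \;=\; \Exp_{a \leftarrow \Un{n-k}}\bigl[\Delta(f(L^{a,n}),\Un{m})\bigr].$$
Since the ERF condition and the average-case condition both require this quantity to be at most $\ep$ for every $L \in \subsets{n}{n-k}$, the two formulations are equivalent.

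There is essentially no obstacle here: the entire content is the standard conditioning identity for statistical distance together with the observation that fixing the $L$-coordinates of $\Un{n}$ to $a$ produces exactly the bit-fixing source $L^{a,n}$. The only place to be careful is ensuring that both joint distributions really do share the marginal $\Un{n-k}$ on the first coordinate, which is immediate from the definitions.
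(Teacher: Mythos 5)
Your proof is correct: the conditioning identity $\Delta\bigl((X,Y),(X,Z)\bigr)=\Exp_{x\leftarrow X}\bigl[\Delta(Y\mid X=x,\,Z\mid X=x)\bigr]$ for joints sharing the marginal $X$, combined with the observation that conditioning $\Un{n}$ on $\Un{n}|_L=a$ yields exactly the source $L^{a,n}$ (and that $\Un{m}$ is unaffected by this conditioning), gives the equality of the two quantities for each fixed $L$, from which both directions of the equivalence follow at once. The paper itself defers the proof of this lemma to the cited thesis, and your argument is the standard one for this statement; no gaps.
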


Allowing the adversary to adaptively request bits of $f$'s input one at a time gives rise to the strictly stronger notion of an \em adaptive ERF\em:

\begin{definition}[Adaptive Exposure-Resilient Function]
An \em adaptive $(k, \ep)$-ERF \em is a function $f \colon \B^n \rightarrow \B^m$ with the property that for every algorithm $A \colon \B^n \rightarrow \B^*$ that can (adaptively) read at most $n-k$ bits of its input,\footnote{In other words, $A$ is a binary decision tree of depth $n-k-1$ with leaves labelled by its output strings and each internal node labelled by the position of the bit that $A$ requests at that juncture.} $f$ satisfies $\dpr{A(\Un{n})}{f(\Un{n})} \approx_\ep \dpr{A(\Un{n})}{\Un{m}}$.
\end{definition}

The following lemma will allow us to restrict our attention to algorithms $A$ that simply output the values of the bits that they request as they receive them (rather than outputting some function of those bits).

\begin{lemma}
\label{lem:AERFsimplification}
Let $A \colon \B^n \rightarrow \B^*$ be an adaptive adversary that reads at most $d$ bits of its input and let $A_r \colon \B^n \rightarrow \B^*$ be the algorithm that adaptively reads the same bits as $A$ and outputs them in the order that they were read.  For every function $f \colon \B^n \rightarrow \B^m$, the statistical distance between $\dpr{A(\Un{n})}{f(\Un{n})}$ and $\dpr{A(\Un{n})}{\Un{m}}$ is at most the distance between $\dpr{A_r(\Un{n})}{f(\Un{n})}$ and $\dpr{A_r(\Un{n})}{\Un{m}}$.
\end{lemma}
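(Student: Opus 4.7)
The plan is to show that $A$'s output is a deterministic post-processing of $A_r$'s output, and then invoke the data-processing inequality for statistical distance.

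First, I would argue that there exists a deterministic function $g$ such that $A(x) = g(A_r(x))$ for every $x \in \B^n$. Since $A$ is a deterministic binary decision tree, the index of the bit queried at each internal node depends only on the sequence of bit values read along the path from the root. Thus, knowing the ordered sequence of bit values read (which is exactly $A_r(x)$) uniquely determines the path taken through $A$'s decision tree, hence the leaf reached, and hence the output label at that leaf. This defines $g$ on the set of sequences that $A_r$ can actually produce; on other inputs we define $g$ arbitrarily. In particular, $A_r(x)$ also determines $A(x)$ with no further information about $x$.

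Second, I would apply the following standard fact: for any function $h$ on a set $S \times T$ and any pair of distributions $P, Q$ on $S \times T$, we have $\Delta(h(P), h(Q)) \leq \Delta(P, Q)$. Consider the map $h(y, z) = (g(y), z)$ applied to the two distributions $(A_r(\Un{n}), f(\Un{n}))$ and $(A_r(\Un{n}), \Un{m})$. By the observation above,
\[
h(A_r(\Un{n}), f(\Un{n})) = (A(\Un{n}), f(\Un{n})), \qquad h(A_r(\Un{n}), \Un{m}) = (A(\Un{n}), \Un{m}),
\]
and the data-processing inequality immediately gives the desired bound.

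There is essentially no technical obstacle here; the only point requiring care is the justification that the ordered transcript of read bits uniquely determines which indices $A$ queried (since the tree is deterministic) and thus the leaf reached. Once that is in place, the result is a one-line application of the fact that deterministic post-processing cannot increase statistical distance, applied only to the first coordinate of the joint distribution (leaving $f(\Un{n})$ and $\Un{m}$ untouched in the second coordinate).
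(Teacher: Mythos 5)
Your proposal is correct and follows essentially the same route as the paper: the paper defines an algorithm $A_p$ that simulates $A$ on the transcript produced by $A_r$, so that $A = A_p \circ A_r$, which is exactly your post-processing function $g$, and the conclusion then follows from the same data-processing argument you make explicit. The only cosmetic difference is that the paper pads $A_r$'s output to fixed length $d$ before composing, while you handle variable-length transcripts directly.
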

\begin{proof}
First, modify $A_r$ by padding its output with $0$'s so that its output length is always $d$.  Now define a second algorithm $A_p \colon \B^d \rightarrow \B^*$ as follows: on an input $x \in \B^d$, $A_p$ runs $A$, sequentially feeding it the bits of $x$ in response to $A$'s requests, and then outputs $A$'s output.  The fact that $A = A_p \circ A_r$ then implies the desired result.
\end{proof}

\section{A simplification and a lower bound}

In this section, we prove that when the entropy parameter $k$ is sublogarithmic in the input length $n$, an output length of $O(\log{k})$ is optimal for a natural class of space-bounded streaming algorithms, including algorithms that use the input bits to conduct a random walk on a graph. Before we state this lower bound, we give a simple improvement on the state of the art in explicit constructions of extractors for oblivious bit-fixing sources (i.e. resilient functions) for sublogarithmic entropy. Our lower bound then shows that the parameters achieved by this construction are optimal.

\subsection{The simplification}
We start with a simplification of a previous construction due to~\citep{KampZuckerman}.  The previous construction is based on very good extractors for oblivious \em symbol-fixing \em sources with $d \geq 3$ symbols obtained by using the symbols of the input string to take a random walk on an expander graph of degree $d$.  Since expander graphs do not exist with degree $d=2$, this approach could not be used for oblivious bit-fixing sources.  However, the construction of \citep{KampZuckerman} uses the fact that while a random walk on an expander is not an option, a random walk on a cycle still extracts some randomness even when the entropy $k$ of the input is very small.  Our construction is a slight modification of this random walk that simplifies the argument and improves the error parameter.

\begin{theorem} \label{thm:KZ}
 For every $n \in \N$, $k \in [n]$, $\ep > 0$, and $m = \frac{1}{2}( \log{k} - \log\log{(1/\ep)} )$, the function $f \colon \B^n \rightarrow \B^m$ defined by
 $$f(w) = \sum_{i=1}^n{w_i} \pmod{2^m}$$ is a $(k,\ep)$-RF. In particular, setting $\ep = 2^{-\sqrt{k}}$ gives output length $m = \frac{1}{4}\log{k}$.
\end{theorem}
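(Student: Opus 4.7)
The plan is to analyze the distribution of $f(X) = \sum_{i=1}^n X_i \pmod{2^m}$ on an $(n,k)$-OBFS $X$ via discrete Fourier analysis on the cyclic group $\mathbb{Z}/2^m\mathbb{Z}$. Because $f$ is an additive homomorphism modulo $2^m$, for any fixed-bit pattern on positions $F$ summing to some constant $c$ and random positions $R$ with $|R|=k$, we have $f(X) = c + \sum_{i \in R} U_i \pmod{2^m}$, where the $U_i$ are i.i.d.\ uniform bits. Translation by $c$ is a bijection on $\mathbb{Z}/2^m\mathbb{Z}$ and preserves statistical distance to uniform, so it suffices to bound $\Delta(S, \Un{m})$ where $S = \sum_{i=1}^k U_i \pmod{2^m}$.

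Setting $N = 2^m$ and $\omega = e^{2\pi i / N}$, the Fourier coefficient of the distribution $p$ of $S$ at $j \in \{0, 1, \ldots, N-1\}$ factors as
\[
\hat{p}(j) \;=\; \Exp\!\left[\omega^{jS}\right] \;=\; \prod_{i=1}^{k}\Exp\!\left[\omega^{jU_i}\right] \;=\; \left(\frac{1 + \omega^j}{2}\right)^{\!k},
\]
which has modulus $|\cos(\pi j/N)|^k$. Applying Parseval's identity followed by Cauchy--Schwarz (standard for bounding $L^1$ by Fourier $L^2$ on an Abelian group) gives
\[
\Delta(S, \Un{m}) \;\le\; \tfrac{1}{2}\sqrt{\sum_{j=1}^{N-1} \cos^{2k}(\pi j/N)}.
\]

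To bound this sum, I would use the elementary inequality $\cos x \le e^{-x^2/2}$ valid for $|x| \le \pi/2$, together with the symmetry $|\cos(\pi j/N)| = |\cos(\pi(N-j)/N)|$ to fold the sum around $j = N/2$ (where the term vanishes). Using $j^2 \ge j$ to pass from the resulting Gaussian-tail sum to a geometric series gives the upper bound $2\sum_{j\ge 1} e^{-k \pi^2 j^2 / N^2} \;\le\; O\bigl(e^{-k\pi^2/N^2}\bigr)$. Substituting the prescribed choice $N^2 = 2^{2m} = k/\log(1/\ep)$ makes the leading exponent $\pi^2 \log(1/\ep)$, so this is $\ep^{\Omega(1)}$ with constant strictly greater than $2$, and the square root then gives a bound of the form $\ep^{c}$ with $c > 1$, which is at most $\ep$.

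The main obstacle is really bookkeeping rather than mathematical substance: keeping the cosine argument inside $[0, \pi/2]$ so that the Gaussian bound applies, correctly exploiting the symmetry around $j = N/2$ so both halves of the sum are handled, and making sure the constants hidden in ``$O$'' do not erode the slack between the exponent one gets and the target exponent $1$. Verifying that the crude exponent $\pi^2/2$ (or $\pi^2/(2\ln 2)$, depending on the log base) exceeds $1$ with enough room, possibly by trimming $m$ by an additive constant for very large $\ep$, completes the proof.
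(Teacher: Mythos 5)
Your proposal is correct and follows essentially the same route as the paper: reduce the fixed bits to a harmless translation on $\Z{2^m}$, then bound the distance of the $k$-step lazy walk from uniform via Fourier analysis on the cyclic group, Gaussian-type bounds on the cosine terms, and a geometric-series comparison. The only cosmetic difference is that you invoke Parseval plus Cauchy--Schwarz directly, whereas the paper cites the equivalent Diaconis upper bound lemma; the resulting bound $e^{-k\pi^2/2M^2}$ (up to constants) and the parameter bookkeeping, including the implicit need for $\ep$ bounded away from $1$ so that $k \geq M^2$, are the same.
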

\begin{proof}
 We can treat $f$ as computing the endpoint of a walk on $\Z{M}$ (where $M=2^m$) that starts at $0$ and either adds $1$ or $0$ to its state with every bit that it reads.  Since the endpoint of this walk does not depend on the order in which the input bits are processed, we may assume without loss of generality that all of the fixed bits in $f$'s input come at the beginning.  These bits only change the starting vertex of the random walk and do not affect the distance from uniform of the resulting distribution.  Therefore, to bound the distance from uniform of any distribution of the form $f(L^{*,n})$ we need only bound the mixing time of a walk on $\Z{M}$ consisting of $k$ random steps.  The following claim, whose proof we defer to the appendix, accomplishes this.

 \begin{claim}
 \label{claim:Fourier}
  Let $W_k$ be the distribution on the vertices of $\Z{M}$ (where $M=2^m$) obtained by beginning at $0$ and adding $1$ or $0$ with equal probability $k$ times.  The distance from uniform of $W_k$ is at most
  $$ \frac{e^{-k\pi^2/2M^2}}{2\left(1-e^{-3k\pi^2/2M^2}\right)} $$
 \end{claim}

Since $k \geq M^2$, the bottom of the fraction in Claim~\ref{claim:Fourier} is bounded from below by $2(1-e^{-3\pi^2/2}) > 1$ and so we have bounded the distance from uniform by $e^{-k\pi^2/2M^2}$.  With our setting of parameters this is at most $\ep^{\log{(e)}\pi^2/2} \leq \ep$, as desired.
\end{proof}

The difference between this construction and that of~\citep{KampZuckerman} is that each step of the random walk carried out by $f$ consists of adding either $1$ or $0$ rather than $1$ or $-1$ to the current state.  This has two advantages.  First, the random walk in the construction of~\citep{KampZuckerman} cannot be carried out on a graph of size $2^m$ since any even-sized cycle is bipartite and the walk traverses an edge at each step.  This necessitates an additional lemma about converting the output of the random walk to one that is almost uniformly distributed over $\B^m$, which incurs at error polynomially related to $k$.\footnote{This additional error was overlooked in~\citep{KampZuckerman}, and their Theorem $1.2$ erroneously claims an error exponentially small in $k$.}  By eliminating the need for this lemma, the construction of Theorem~\ref{thm:KZ} manages to achieve an exponentially small error parameter.  Second, setting $m=1$ in the construction of Theorem~\ref{thm:KZ} makes it clear that the idea underlying both it and the~\citep{KampZuckerman} construction is simply a generalization of bitwise addition modulo $2$---the parity function---which extracts $1$ uniformly random bit whenever $k \geq 1$.

As discussed previously, this construction achieves output length only logarithmic in $k$. This is considerably worse than the output length of $k - 2\log{(1/\ep)} - O(1)$ which we show to be possible both for extractors for OBFSs with $k > \log{n}$ (Section~\ref{sec:ProbabilisticMethodRFs}) and for ERFs (Section~\ref{sec:ProbabilisticMethodERFs}). The lower bound we prove in the following section shows why this is the case.

\subsection{The lower bound}
The extractor of Theorem~\ref{thm:KZ} is a symmetric function; that is, its output is not sensitive to the order in which the input bits are arranged. We begin building our more general negative result by first showing that extractors for OBFSs with this property cannot have superlogarithmic output length.

\begin{lemma} \label{lem:symmetric}
 Suppose that $X = L^{a,n}$ is an $(n,k)$-OBFS and that $f \colon \B^n \rightarrow \B^m$ is a symmetric function of the input bits in $[n] - L$.  (That is, for every permutation $\pi \colon [n] \rightarrow [n]$ that fixes $L$, $f(x_{\pi(1)}, \ldots, x_{\pi(n)}) = f(x_1,\ldots,x_n)$.)  Then $f(X) \approx_\ep \Un{m}$ implies that $m \leq \log{(k/(1-\ep))}$.
\end{lemma}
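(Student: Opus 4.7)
The plan is to use symmetry to force $f(X)$ to have a very small image, and then invoke the elementary support-size lower bound on statistical distance to uniform.

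First, I would observe that because the bits of $X$ at positions in $L$ are fixed to $a$ while $f$ is invariant under permutations of the remaining $k$ positions (which carry the $k$ uniformly random bits), $f(X)$ is a function only of $a$ and of the Hamming weight $H$ of the $k$ random bits. Since $a$ is constant and $H$ takes values in $\{0, 1, \ldots, k\}$, the support $S \subseteq \B^m$ of $f(X)$ has cardinality at most $k+1$.

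Next, I would apply the standard inequality relating support size to statistical distance from uniform: taking $T = S$ as the distinguishing set in the definition of $\Delta$ gives
\[
\ep \;\geq\; \Delta(f(X),\Un{m}) \;\geq\; \Pr[f(X)\in S] - \Pr[\Un{m}\in S] \;=\; 1 - \frac{|S|}{2^m} \;\geq\; 1 - \frac{k+1}{2^m}.
\]
Rearranging yields $2^m \leq (k+1)/(1-\ep)$, which gives the claimed bound $m \leq \log(k/(1-\ep))$ up to the usual $\log((k+1)/k)$ slack that is absorbed for any reasonable $k$.

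I do not anticipate any real obstacle: the entire content of the argument is the symmetry-to-Hamming-weight reduction, after which the statistical distance bound is immediate from the fact that a distribution supported on $\leq k+1$ points cannot be close to uniform on a set of size $2^m \gg k+1$. The only thing to be careful about is applying the symmetry hypothesis correctly to conclude that the image of $f$, restricted to the support of $X$, really does collapse down to the $k+1$ possible Hamming weights of the random coordinates.
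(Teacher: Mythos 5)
Your proof is correct and is essentially the paper's own argument: symmetry reduces $f(X)$ to a function of the Hamming weight of the $k$ random bits, and the support-size lower bound on statistical distance to $\Un{m}$ finishes the job. One remark: your count of at most $k+1$ support points is the accurate one (the paper's proof asserts support at most $k$), so what the argument actually yields is $m \leq \log((k+1)/(1-\ep))$; the stated $\log(k/(1-\ep))$ is slightly too strong as written (e.g.\ parity with $k=1$, $\ep=0$ has $m=1>0$), though this one-unit slack is immaterial for the ways the lemma is applied later in the paper.
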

\begin{proof}
 By the symmetry of $f$ on the bits in $[n] - L$, the size of the support of $f(\pr{X})$ is at most $k$.  (The output depends only on the number of input bits in $[n] - L$ that equal $1$.)  Thus, the distance between $f(\pr{X})$ and $\Un{m}$ is at least $(M-k)/M$.  Together with $f(X) \approx_\ep \Un{m}$, this implies that $\ep \geq (M-k)/M$, which is equivalent to $m \leq \log{(k/(1-\ep))}$.
\end{proof}

We can use Lemma~\ref{lem:symmetric} to show that no symmetric function with large output length can be even a static ERF.

\begin{proposition} \label{prop:symmetric}
 If a symmetric function $f \colon \B^n \rightarrow \B^m$ is a static $(k,\ep)$-ERF then $m \leq \log{(k/(1-\ep))}$.
\end{proposition}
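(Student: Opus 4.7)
The plan is to reduce Proposition~\ref{prop:symmetric} to Lemma~\ref{lem:symmetric} via the average-case characterization of static ERFs from Lemma~\ref{lem:ERFalternate}. The key observation is that Lemma~\ref{lem:symmetric}'s conclusion comes from a support-size bound: for any symmetric $f$ and any fixing $L^{a,n}$, the output $f(L^{a,n})$ is supported on at most $k+1$ values (determined by the Hamming weight of the $k$ random bits), giving statistical distance at least $(M-k)/M$ from $\Un{m}$. This bound holds \emph{pointwise} in $a$, not just in expectation, which is what makes the reduction work.

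First, I would fix any $L \in \subsets{n}{n-k}$ and apply Lemma~\ref{lem:ERFalternate} to rewrite the static ERF hypothesis as
$$\Exp_{a \leftarrow \Un{n-k}} [ \Delta(f(L^{a,n}), \Un{m}) ] \leq \ep.$$
Next, for each individual $a \in \B^{n-k}$, I would observe that $L^{a,n}$ is an $(n,k)$-OBFS whose free bits lie in the coordinates $[n] \setminus L$, and that $f$ is symmetric on those coordinates (since $f$ is symmetric on all of $[n]$, hence a fortiori on any subset). Applying the support-size argument from the proof of Lemma~\ref{lem:symmetric} then yields $\Delta(f(L^{a,n}), \Un{m}) \geq (M-k)/M$ for every $a$.

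Averaging this pointwise lower bound over $a$ and combining with the ERF bound gives $(M-k)/M \leq \ep$, which rearranges to $m \leq \log(k/(1-\ep))$, as desired.

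I do not expect any real obstacle here: once Lemma~\ref{lem:ERFalternate} is used to pass to the average-case formulation, the fact that Lemma~\ref{lem:symmetric}'s distance lower bound is uniform in the fixing $a$ makes the averaging step immediate. The only subtlety worth flagging in the write-up is that the support-size bound of Lemma~\ref{lem:symmetric} depends only on the symmetry of $f$ on the free coordinates and on $k$, not on the particular fixed values $a$, so the bound passes through the expectation without loss.
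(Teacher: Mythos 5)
Your proposal is correct and takes essentially the same route as the paper: pass to the average-case formulation via Lemma~\ref{lem:ERFalternate}, then invoke the support-size bound underlying Lemma~\ref{lem:symmetric}. The paper merely runs the averaging in the other direction---extracting a single $a$ with $\Delta(f(L^{a,n}),\Un{m}) \leq \ep$ and applying Lemma~\ref{lem:symmetric} to that source as a black box---which is the same argument rearranged.
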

\begin{proof}
 From Lemma~\ref{lem:ERFalternate}, we have that for $f$ to be a static ERF, it must satisfy, for all sets $L \in \subsets{n}{n-k}$,
$$\Exp_{a \leftarrow \Un{n-k}} {\left[ \Delta\left(f\left(L^{a,n}\right), \Un{m}\right) \right]} \leq \ep$$
It follows by averaging that there exists a set $L$ and a string $a$ such that $f(L^{a,n}) \approx_\ep \Un{m}$.  Application of Lemma~\ref{lem:symmetric} to the source $L^{a,n}$ then yields the result.
\end{proof}

Since every deterministic $\ep$-extractor for $(n,k)$-OBFSs is a static $(k,\ep)$-ERF and every adaptive $(k,\ep)$-ERF is also a static $(k,\ep)$-ERF, Proposition~\ref{prop:symmetric} applies to extractors for OBFSs and adaptive ERFs as well.  Thus, 
Proposition~\ref{prop:symmetric} shows that constructions like that of Theorem~\ref{thm:KZ} and that of~\citep{KampZuckerman} are optimal.

However, there are many natural candidates for extraction from OBFSs that are similar to that of Theorem~\ref{thm:KZ} but are not symmetric, such as the analogous random walk on a directed version of a $3$-regular or $4$-regular expander graph. For instance, we could try the graph with vertex set $\mathbb{F}_p$ where the edge labelled $0$ from vertex $x$ goes to $x+1$ and the edge labelled $1$ goes to $x^{-1}$ (or $0$ in case $x = 0$). The undirected version of this graph is known to be an expander~\citep{Lubotzky}, so we might hope that with $k$ random steps we can reach an almost uniform vertex even for $p = 2^{\Omega(k)}$ and thus output $\Omega(k)$ almost-uniform bits.

$\mathbb{F}_p$ with inverse cords rather than an undirected cycle. It turns out that such constructions do no better, as we now show by extending the above lower bound for extractors for OBFSs to a large class of small-source streaming algorithms.  We start by defining the model of computation that we will assume.

\begin{definition}[Streaming Algorithm]
A \em streaming algorithm \em $A \colon \B^n \rightarrow \B^m$ is given by a $5$-tuple $(V,v_0,\Sigma^0,\Sigma^1,\varphi)$, where $V$ is the state space, $v_0 \in V$ is the initial state, $\Sigma^0 = (\sigma_1^0, \ldots, \sigma_n^0)$ and $\Sigma^1 = (\sigma_1^1, \ldots, \sigma_n^1)$ are two sequences of functions from $V$ to itself, and $\varphi$ is a function from $V$ to $\B^m$.  On an input sequence $(b_1,\ldots,b_n) \in \B^n$, $A$ computes by updating its state using the rule $v_{i+1} = \sigma_i^{b_i}(v_i)$.  $A$'s output is $A(b_1,\ldots,b_n) = \varphi(v_n)$.  The function $\varphi$ is called the \em output function \em of $A$, and the \em space \em of $A$ is $\log{|V|}$.

We say that $A$ is \em forgetless \em if and only if for every $i$ at least one of either ${\sigma_i}^0$ or ${\sigma_i}^1$ is a permutation.  (Thus, if the $i$-th bit is fixed to a certain value, $A$ does not ``forget'' anything about its state when reading that bit.)
\end{definition}

Forgetless streaming algorithms include random walks on $2$-regular digraphs that are consistently labelled (meaning that the edges labelled $b$ form a permutation, for each $b \in \B$), like the graph on $\mathbb{F}_p$ mentioned above. However, forgetless streaming algorithms are more general in the sense that they can compute random walks in which each step of the walk is conducted on a different graph.

We now show that forgetless streaming algorithms with small space cannot compute extractors for OBFSs with large output length (for small $k$). This is our main result.

\begin{theorem}
\label{thm:result}
 Suppose that $f \colon \B^n \rightarrow \B^m$ is a deterministic $\ep$-extractor for $(n,k)$-OBFSs that can be computed by a forgetless streaming algorithm with space $s \leq \log{(n/k)}/k$.  Then $m \leq \log{(k/(1-\ep))}$.
\end{theorem}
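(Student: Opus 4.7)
The plan is to produce, under the hypothesis on $s$, an OBFS $L^{a,n}$ on which $f$ takes at most $k+1$ distinct values, and then invoke the support-counting argument from the proof of Lemma~\ref{lem:symmetric} (absorbing the off-by-one into the stated bound).

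\textbf{Normalization and gauge.} By the forgetless hypothesis, at each position $i$ at least one of $\sigma_i^0,\sigma_i^1$ is a permutation. Since relabeling the two bit values at a single position preserves the class of OBFSs (the fixed string $a$ is arbitrary), we may assume $\pi_i:=\sigma_i^0$ is a permutation for every $i$; write $\rho_i=\sigma_i^1$ and $\Pi_i=\pi_i\circ\cdots\circ\pi_1$. Set the ``default'' value of every fixed bit to $0$. In the gauge $w_i:=\Pi_i^{-1}(v_i)$, a fixed $0$-bit leaves $w$ unchanged, while a $1$-bit at position $i$ applies the twisted map $\tilde\rho_i:=\Pi_{i-1}^{-1}\circ\pi_i^{-1}\circ\rho_i\circ\Pi_{i-1}\colon V\to V$. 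Thus, for any $k$ free positions $i_1<\cdots<i_k$ and any bits $b\in\B^k$ assigned to them, the final state is $\Pi_n(\tilde\rho_{i_k}^{b_k}\circ\cdots\circ\tilde\rho_{i_1}^{b_1}(v_0))$, and $f$'s output depends only on this gauged state.

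\textbf{Iterated pigeonhole.} Build a chain $[n]=S_0\supseteq S_1\supseteq\cdots\supseteq S_k$ and states $u_0=v_0,u_1,\dots,u_k\in V$ as follows: given $S_{j-1}$ and $u_{j-1}$, the values $\{\tilde\rho_i(u_{j-1})\}_{i\in S_{j-1}}$ lie in $V$, so pigeonhole on the $|V|=2^s$ possibilities yields $u_j\in V$ and $S_j\subseteq S_{j-1}$ with $|S_j|\geq|S_{j-1}|/2^s$ on which $\tilde\rho_i(u_{j-1})=u_j$. The hypothesis $s\leq\log(n/k)/k$ rearranges to $n\geq k\cdot 2^{sk}$, so $|S_k|\geq n/2^{sk}\geq k$, and we may select any positions $i_1<\cdots<i_k$ from $S_k$.

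\textbf{The resulting orbit has at most $k+1$ elements.} Because $i_j\in S_k\subseteq S_\ell$ for every $\ell\leq k$, the construction guarantees $\tilde\rho_{i_j}(u_{\ell-1})=u_\ell$ for all $\ell\in\{1,\dots,k\}$. A short induction on $j$ then shows that for every $b\in\B^k$,
\[\tilde\rho_{i_k}^{b_k}\circ\cdots\circ\tilde\rho_{i_1}^{b_1}(v_0)=u_{|b|},\]
where $|b|$ is the Hamming weight of $b$. Hence, with $L=[n]\setminus\{i_1,\dots,i_k\}$ and $a$ equal to the fixing of the remaining bits to the default values (i.e., $a=0^{n-k}$ in the relabeled encoding), the distribution $f(L^{a,n})$ is supported on $\{\varphi(\Pi_n(u_j))\}_{j=0}^{k}$, a set of size at most $k+1$. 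The support-counting calculation in the proof of Lemma~\ref{lem:symmetric} then yields the claimed bound $m\leq\log(k/(1-\ep))$.

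\textbf{Main obstacle.} The subtle point is the nested pigeonhole: at step $j$ the chosen $i_j$ must behave correctly on \emph{every} previous orbit element $u_0,\dots,u_{j-1}$, not just on the most recent one. The nesting $S_j\subseteq S_{j-1}$ handles this automatically but costs a factor of $2^s$ per step, so after $k$ steps we need $n\geq k\cdot 2^{sk}$ candidates---precisely the hypothesis. The forgetless assumption is essential here: without the permutations $\pi_i$ we could neither define the gauge $\tilde\rho_i$ nor guarantee that the fixed $0$-bits between free positions act reversibly, which is exactly what lets the composition above unfold into a single symmetric orbit.
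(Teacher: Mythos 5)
Your proposal is correct and follows essentially the same route as the paper's proof: the same gauge/conjugation step that makes every $\sigma_i^0$ act as the identity (the paper's functions $f_i$ are exactly your $\tilde\rho_i$), the same iterated pigeonhole producing a nested chain of position sets of size at least $n/2^{sk}\geq k$ along a fixed state sequence, and the same observation that the output then depends only on the Hamming weight of the free bits, so Lemma~\ref{lem:symmetric} applies. The only (cosmetic) differences are that you fix exactly $k$ free positions rather than all of $F_k$, and your careful count of $k+1$ support elements matches the same off-by-one already implicit in the paper's own proof of Lemma~\ref{lem:symmetric}.
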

\begin{proof}[Proof]
 Fix an $\ep$-extractor for $(n,k)$-OBFSs $f \colon \B^n \rightarrow \B^m$ and let $A$ be a forgetless streaming algorithm with space $s \leq \log{(n/k)}/k$ that computes $f$.  To show that $m \leq \log{(k/(1-\ep))}$, we will first reduce to a special case in which we can make some simplifying assumptions about $A$.  We will then construct an oblivious bit-fixing source $\pr{X}$ such that $f$ is symmetric on the set of bit positions not fixed by $\pr{X}$.  This will allow us to apply Lemma~\ref{lem:symmetric} to obtain our result since $f$ must map $\pr{X}$ close to uniform.
 \\
 \textit{Reduction to the special case:} Let $\Sigma^0$ and $\Sigma^1$ be the sequences of functions used by $A$, and let $\varphi$ be its output function.  We reduce to the special case that every element of $\Sigma^0$ is the identity.

 Since $A$ is forgetless, we can switch some of the functions $\sigma_i^0$ and $\sigma_i^1$ to make every function in $\Sigma^0$ a permutation while preserving the fact that $A$ computes a $(k,\ep)$-RF.  (This corresponds to just negating some input bits.)  This allows us to define a new sequence of functions $F = \{f_1,\ldots, f_n\}$ and a new output function $\psi$ by the following relations.
 \begin{eqnarray*}
 \sigma_i^0 \circ \cdots \circ \sigma_1^0 \circ f_i &=& \sigma_i^1 \circ \sigma_{i-1}^0 \circ \cdots \circ \sigma_1^0\\
 \psi &=& \varphi \circ \sigma_n^0 \circ \cdots \circ \sigma_1^0
 \end{eqnarray*}
 Then $(V, v_0, (\mbox{id},\mbox{id},\ldots, \mbox{id}), (f_1, \ldots, f_n), \psi)$ can be verified to be a streaming algorithm that computes the same function as $(V, v_0, \Sigma^0, \Sigma^1, \varphi)$.\\
 \\
 \textit{Constructing the source $X$:} Letting $S = 2^s$, we can choose a set $F_1 \subset F$ of size at least $n/S$ such that all the functions in $F_1$ map the initial state $v_0$ to some common state (call it $v_1$).  We can then choose a set $F_2 \subset F_1$ of size at least $n/S^2$ such that all functions in $F_2$ map $v_1$ to some common state, which we call $v_2$.  Continuing in this way, we obtain a set $F_k \subset F$ of size at least $n/S^k$ and a sequence $(v_0,\ldots,v_k)$ with the property that every $f \in F_k$ satisfies $f(v_i) = v_{i+1}$ for $0 \leq i < k$.  We now define $\pr{X}$ to be the oblivious bit-fixing source that has the bits at positions that correspond to functions in $F_k$ un-fixed and the rest of the bits fixed to $0$. By our assumption that $s \leq \log{(n/k)}/k$, we have $|F_k| \geq n/S^k \geq k$, meaning that $\pr{X}$ has at least $k$ unfixed bits.  \\
 \\
 \textit{Obtaining the desired bound:} For any string $w$ in the support of $\pr{X}$, $f$'s output will be $\psi(v_{H(w)})$ where $H(w)$ is the Hamming weight of $w$.  Therefore $f$ is a symmetric function of the bits in positions not fixed by $\pr{X}$.  Since $\pr{X}$ contains at least $k$ independent, uniformly random bits and $f$ is a $(k,\ep)$-resilient function, Lemma~\ref{lem:symmetric} yields $m \leq \log{(k/(1-\ep))}$ as desired.
\end{proof}

What does this theorem tell us about extraction in low-entropy settings? If we set $s = m \leq k$ (as in the walk on the cycle of Theorem~\ref{thm:KZ}) then Theorem~\ref{thm:result} implies that when $k < \sqrt{\log{n} - \log\log{n}}$ we are confined to output length $m \leq \log{( k/( 1-\ep ) )}$.  In other words, the output length of $\Omega(\log{k})$ offered by Theorem~\ref{thm:KZ} is close to optimal for extractors in this model when $k < \sqrt{\log{n}}$.

We note here a separate, trivial space lower bound that applies even to the forgetful case: since streaming algorithms under our model cannot produce any output bits until they have read all the input bits, we have $s > m - 1$ when $\ep < 1/2$. This bound can in fact be generalized to streaming algorithms that are allowed to output bits at any point in their computation by a simple adaptation of a space lower bound for strong extractors proven in~\citep{BarYossefTrevisanReingoldShaltiel}. The resulting lower bound says that $s \geq m - 4$ when $\ep \leq 1/8$ and $k \leq n/2$ for extractors for OBFSs computable by any streaming algorithm.

\section{Non-constructive results}

We now turn to determining for what values of the entropy parameter $k$ it is possible to achieve output length $m = \Omega(k)$ using the probabilistic method. Here we find that the results are roughly in agreement with our explicit lower bounds from the previous section. That is, a randomly chosen function $f \colon \B^n \rightarrow \B^m$ will almost always be an extractor for OBFSs with output length $m = \Omega(k)$ when $k$ is larger than $\log{n}$, and this output length cannot be achieved using the probabilistic method when $k < \log{n}$.

We then show that random functions can do better in the more relaxed realm of exposure-resilient functions: a randomly chosen function is almost always a static ERF with optimal output length for any $k$, and an \em adaptive \em ERF with optimal output length when $k$ is larger than $\log\log{n}$.

Before we proceed, we state a Chernoff bound and a partial converse to it that we will use in proving these results. A sketch of the proof of Lemma~\ref{lem:ChernoffConverse} is given in the appendix.

\begin{lemma}[A Chernoff bound]
\label{lem:Chernoff}
Let $\pr{X_1}, \ldots, \pr{X_t}$ be independent random variables taking values in $[0,1]$, and let $\pr{X} = ( \sum_i{\pr{X_i}} )/t$ and $\mu = \Exp[\pr{X}]$.  Then for every $0 < \ep < 1$, we have
$$\Pr \left[ \left| \pr{X} - \mu \right| > \ep \right] < 2e^{-t\ep^2/2} \leq 2^{-\lfloor \Omega\left(t\ep^2\right) \rfloor}$$
\end{lemma}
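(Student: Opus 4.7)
The plan is to prove this via the standard Chernoff--Cram\'er exponential moment method. Let $\mu_i = \Exp[\pr{X_i}]$, so $\mu = (1/t)\sum_i \mu_i$. For any $\lambda > 0$, Markov's inequality applied to the nonnegative random variable $e^{\lambda \sum_i (\pr{X_i} - \mu_i)}$ yields
$$\Pr\!\left[\pr{X} - \mu > \ep\right] \;=\; \Pr\!\left[\sum_i (\pr{X_i}-\mu_i) > t\ep\right] \;\leq\; e^{-\lambda t \ep}\prod_{i=1}^t \Exp\!\left[e^{\lambda(\pr{X_i}-\mu_i)}\right],$$
where independence of the $\pr{X_i}$ lets us factor the expectation. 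The problem now reduces to bounding the centered moment generating function of each bounded random variable.

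The key step is Hoeffding's lemma: for $\pr{X_i}$ taking values in $[0,1]$, one has $\Exp[e^{\lambda(\pr{X_i}-\mu_i)}] \leq e^{\lambda^2/8}$. This is the one technical ingredient; the standard derivation bounds $e^{\lambda x}$ on $[0,1]$ by the chord connecting its endpoints (convexity), takes expectations, and then applies a second-order Taylor estimate to $\log \Exp[e^{\lambda(\pr{X_i}-\mu_i)}]$ viewed as a function of $\lambda$. Substituting into the previous display gives
$$\Pr\!\left[\pr{X}-\mu > \ep\right] \;\leq\; e^{-\lambda t \ep + t\lambda^2/8},$$
and optimizing in $\lambda$ (e.g.\ $\lambda = 4\ep$) produces the one-sided tail $e^{-2t\ep^2}$, comfortably stronger than the required $e^{-t\ep^2/2}$. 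The matching lower-tail bound $\Pr[\mu - \pr{X} > \ep] \leq e^{-t\ep^2/2}$ follows by replaying the argument with the variables $1-\pr{X_i} \in [0,1]$, and a union bound over the two tails produces the factor of $2$ in the statement.

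Finally, the inequality $2 e^{-t\ep^2/2} \leq 2^{-\lfloor \Omega(t\ep^2)\rfloor}$ is just a change of base: $e^{-t\ep^2/2} = 2^{-(\log_2 e)\, t\ep^2/2}$, so the exponent on $2$ is linear in $t\ep^2$, and the additive $1$ from the factor of $2$ in front together with the floor are absorbed into the $\Omega(\cdot)$ notation (valid once $t\ep^2$ exceeds a suitable constant; for smaller values the inequality is trivial since the right-hand side is at least $1/2$). There is no real obstacle here: the whole statement is the textbook Hoeffding inequality, and the only substantive ingredient is the MGF bound for $[0,1]$-valued variables, which the authors presumably cite rather than reprove.
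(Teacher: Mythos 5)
Your proof is correct: it is the standard Hoeffding/Chernoff moment-generating-function argument, and the one-sided bound $e^{-2t\ep^2}$ you obtain is in fact stronger than the stated $e^{-t\ep^2/2}$. The paper itself gives no proof of this lemma---it is quoted as a standard Chernoff bound (only its partial converse, Lemma~\ref{lem:ChernoffConverse}, is proved in the appendix)---so your argument simply supplies the textbook derivation the authors implicitly rely on, including the correct observation that the floor and the leading factor of $2$ are harmless since the probability is at most $1$.
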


\begin{lemma}[Partial converse of Chernoff bound]
\label{lem:ChernoffConverse}
Let $\pr{X_1}, \ldots, \pr{X_t}$ represent the results of independent, unbiased coin flips, and let $\pr{X} = ( \sum_i{\pr{X_i}} )/t$.  Then for every $0 \leq \ep \leq 1/2$, we have
$$\Pr \left[ \left| \pr{X} - \frac{1}{2} \right| \geq \ep \right] \geq 2^{-\lceil O\left(t\ep^2\right) \rceil}$$
\end{lemma}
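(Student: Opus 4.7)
The plan is to reduce the lemma to a lower bound on a $\text{Binomial}(t,1/2)$ tail and then invoke Stirling's formula.  Writing $S = \sum_i \pr{X_i}$, we have $\Pr[\,|\pr{X}-1/2| \geq \ep\,] = \Pr[\,|S-t/2| \geq \ep t\,]$, and by symmetry it suffices to bound $\Pr[\,S \geq \lceil t/2 + \ep t\rceil\,]$ from below.

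In the regime $t\ep^2 = O(1)$ the target bound $2^{-\lceil O(t\ep^2)\rceil}$ is simply a positive constant, and this is delivered by a standard second-moment argument: Paley--Zygmund applied to $(S-t/2)^2$ (whose mean is $t/4$ and whose fourth moment is $\Theta(t^2)$) yields $|S-t/2| = \Omega(\sqrt{t})$ with constant probability, which covers every $\ep$ in this range.  For the main regime $t\ep^2 \geq 1$, I would use Stirling's formula in the form
\[
\binom{t}{k}\,2^{-t} \;\geq\; \frac{c}{\sqrt{t}}\cdot 2^{-t\cdot D(k/t \,\|\, 1/2)} \qquad (0<k<t),
\]
where $c>0$ is an absolute constant and $D$ denotes the binary KL divergence; a short Taylor expansion then gives $D(1/2+\alpha\,\|\,1/2) \leq C\alpha^2$ uniformly in $\alpha \in [0,1/2]$ for some absolute constant $C$.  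The only obstacle to the desired bound is the $1/\sqrt{t}$ Stirling prefactor, and the natural way around it is to sum $\Theta(\sqrt{t})$ consecutive terms: setting $k_0 = \lceil t/2 + \ep t\rceil$, every $k \in \{k_0,\ldots,k_0+\lfloor\sqrt{t}\rfloor\}$ (truncated at $t$) still satisfies $k/t \leq 1/2 + \ep + 1/\sqrt{t}$, so each term contributes at least $(c/\sqrt{t})\cdot 2^{-Ct(\ep+1/\sqrt{t})^2}$.  Since $(\ep+1/\sqrt{t})^2 \leq 2\ep^2 + 2/t$, summing cancels the Stirling prefactor and yields a lower bound of $2^{-O(t\ep^2)-O(1)}$, which matches $2^{-\lceil O(t\ep^2)\rceil}$ after absorbing the additive constant into the ceiling.

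The main book-keeping obstacle I expect is the degenerate case where $\ep$ is within $O(1/\sqrt{t})$ of $1/2$: here the summation range must be truncated and one should instead compare directly to $\binom{t}{t}2^{-t} = 2^{-t}$, which already matches $2^{-O(t\ep^2)}$ since $\ep = \Omega(1)$.  Beyond that, the proof is a routine Taylor expansion for $D$ plus careful book-keeping of the absolute constants inside the $O(\cdot)$ and $\lceil\cdot\rceil$.
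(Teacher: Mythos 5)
Your main regime is essentially the paper's argument: the paper likewise lower-bounds the upper tail by a block of consecutive binomial terms (those between $(1/2+\ep)t$ and $(1/2+2\ep)t$, of which there are $\lfloor \ep t\rfloor \ge \sqrt{t}/4$ in its middle case), bounds each by the smallest one via Stirling together with the quadratic bound on the divergence, and disposes of $\ep$ close to $1/2$ with the trivial $2^{-t}$ term. That part of your proposal is correct and matches the paper's proof.

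The gap is in the small-deviation regime. Because of the ceiling in the statement, once $t\ep^2$ is below $1/C$ for whichever constant $C$ you fix, the asserted bound is exactly $2^{-\lceil Ct\ep^2\rceil}=2^{-1}=1/2$, not merely ``a positive constant''; and the paper genuinely uses this value in the proof of Theorem~\ref{thm:RFnonexistence} (``if the $O(K\ep^2)$ term is less than or equal to $1$, this probability is at most $2^{-N/K}$''). Paley--Zygmund applied to $(S-t/2)^2$ cannot deliver this: with $\Exp[(S-t/2)^2]=t/4$ and $\Exp[(S-t/2)^4]=(3t^2-2t)/16$, it yields $\Pr\bigl[|S-t/2|\ge \sqrt{\theta t}/2\bigr]\ge (1-\theta)^2\, t/(3t-2)$, which is below $1/2$ for every $t\ge 2$ and $\theta\in(0,1)$, and moreover it only certifies deviations $\ep t<\sqrt{t}/2$, i.e. $t\ep^2<1/4$, so it does not even cover the full range $t\ep^2=O(1)$ you assign to it (extending your Stirling argument downward fills that range, but again only with a constant smaller than $1/2$). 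The paper's route avoids this: for $\ep\le c/\sqrt{t}$ with $c$ a small constant, at most $2\ep t+1$ values of $S$ lie within $\ep t$ of $t/2$, each with mass at most $\binom{t}{\lfloor t/2\rfloor}2^{-t}\le\sqrt{2/(\pi t)}$, so $\Pr\bigl[|\pr{X}-1/2|<\ep\bigr]\le 2c\sqrt{2/\pi}+\sqrt{2/(\pi t)}<1/2$ for $t$ above a constant (small $t$ being checked directly). Replacing your Paley--Zygmund step with this counting argument repairs the proof; the remainder of your proposal coincides with the paper's.
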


\subsection{Deterministic extractors for OBFSs}
\label{sec:ProbabilisticMethodRFs}

Theorem~\ref{thm:RFexistence} below, which follows from a straightforward application of the Chernoff bound stated in Lemma~\ref{lem:Chernoff}, shows that the probabilistic methods gives extractors for OBFSs with $k > \log{n}$. Theorem~\ref{thm:RFnonexistence} then shows that $k > \log{n}$ is the best we can do using the probabilistic method.

\begin{theorem}
\label{thm:RFexistence}
 For every $n \in \N$, $k \in [n]$, and $\ep > 0$, a randomly chosen function $f \colon \B^n \rightarrow \B^m$ with $m \leq k - 2\log{(1 / \ep)} - O(1)$ and $k \geq \max\lbrace\log{(n-k)},\log\log{\binom{n}{k}}\rbrace + 2\log{(1/\ep)} + O(1)$ is a deterministic $\ep$-extractor for $(n,k)$-OBFSs with probability at least $1 - 2^{-\Omega(K\ep^2)}$, where $K = 2^k$.
\end{theorem}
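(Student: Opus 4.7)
The plan is a standard probabilistic-method argument: we show that for a uniformly random $f$, the event ``$f$ is an $\ep$-extractor for every $(n,k)$-OBFS'' has probability at least $1 - 2^{-\Omega(K\ep^2)}$ by union-bounding a Chernoff estimate over (i) every OBFS and (ii) every subset $T \subseteq \B^m$ that witnesses the statistical distance. Concretely, fix an OBFS $X = L^{a,n}$ (specified by $L \in \subsets{n}{n-k}$ and $a \in \B^{n-k}$) and a test set $T \subseteq \B^m$. Over random $f$, the $K$ indicators $\mathbb{1}[f(x) \in T]$ for $x$ ranging over the support of $X$ are independent $\{0,1\}$-variables, each with mean $|T|/M$. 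Applying Lemma~\ref{lem:Chernoff} with $t=K$ gives
$$\Pr_f\bigl[\,|\Pr[f(X)\in T] - |T|/M| > \ep\,\bigr] \leq 2^{-\Omega(K\ep^2)}.$$

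Next I would union-bound over the three choices: at most $\binom{n}{k}$ choices of $L$, $2^{n-k}$ choices of $a$, and $2^M$ choices of $T$. This gives a failure probability of at most
$$\binom{n}{k} \cdot 2^{n-k} \cdot 2^M \cdot 2^{-\Omega(K\ep^2)},$$
and whenever $\max(T,\Pr)$ is at most $\ep$ for every pair $(X,T)$, the max-over-$T$ characterization of statistical distance in Definition~\ref{def:extractor} guarantees $f(X) \approx_\ep \Un{m}$ for every OBFS $X$. It remains to check that the three hypotheses on $k$ and $m$ make each of the three union-bound factors negligible compared to $2^{\Omega(K\ep^2)}$: the bound $m \leq k - 2\log(1/\ep) - O(1)$ gives $M \leq K\ep^2/C$; the bound $k \geq \log(n-k) + 2\log(1/\ep) + O(1)$ gives $n-k \leq K\ep^2/C$; and the bound $k \geq \log\log\binom{n}{k} + 2\log(1/\ep) + O(1)$ gives $\log\binom{n}{k} \leq K\ep^2/C$. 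Choosing the $O(1)$ constants large enough so that the sum of these three quantities is at most, say, half of the exponent in the Chernoff bound yields the claimed failure probability of $2^{-\Omega(K\ep^2)}$.

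Essentially no step is a true obstacle; the only mildly delicate bookkeeping is to make sure the implicit constants in the three hypotheses are chosen consistently with the implicit constant in the Chernoff exponent so that the union bound indeed closes. I would also briefly observe that using the ``max over $T$'' characterization (rather than $\frac{1}{2}\sum_w |\cdot|$) is what makes the argument tight: applying Chernoff pointwise at each $w \in \B^m$ with tolerance $\ep/M$ loses a factor of $M^2$ in the exponent and would require a much stronger hypothesis on $k$.
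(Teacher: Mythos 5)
Your proposal is correct and follows essentially the same route as the paper's proof: a Chernoff bound for each fixed source and test $T \subseteq \B^m$, followed by a union bound over the $2^M$ tests and the $\binom{n}{k}\cdot 2^{n-k}$ sources, with the hypotheses on $m$ and $k$ absorbing the union-bound factors into the exponent. The only difference is presentational (you split the source count into $L$ and $a$ explicitly and spell out the constant bookkeeping), so no further comment is needed.
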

\begin{proof}
Fix an $(n,k)$-OBFS $\pr{X}$. Choosing the function $f$ consists of independently assigning a string in $\B^m$ to each string in the support of $\pr{X}$. In order for $f$ to map $\pr{X}$ close to uniform, we need to have chosen it such that, for every fixed statistical test $T \subset \B^m$, the fraction of strings in $\pr{X}$ mapped by $f$ into $T$ is very close to the density of $T$ in $\B^m$. This is expressed formally by the condition below.
$$\left| \frac{|f^{-1}(T)|}{2^k} - \frac{|T|}{2^m} \right| \leq \ep$$
Now fix one specific test $T \subset \B^m$. For each string $w$ in the support of $\pr{X}$, define the indicator variable $I_w$ to be $1$ if $f(w) \in T$ and $0$ otherwise. Then Lemma~\ref{lem:Chernoff} (our Chernoff bound) applied to $\left( \sum_w{I_w} \right)/2^k = |f^{-1}(T)|/2^k$ shows that $f$ fails the condition above with probability at most $2^{-\lfloor \Omega(K\ep^2) \rfloor}$.

There are $2^M$ possible tests $T \subset \B^m$ (where $M = 2^m$). A union bound over all these tests therefore gives that the probability that $f$ fails to map $\pr{X}$ to within $\ep$ of uniform is at most $2^{M-\lfloor \Omega(K\ep^2) \rfloor}$. We can perform a similar union bound over the possible choices of the source $\pr{X}$: there are $\binom{n}{k}N/K$ such sources, yielding that the probability that $f$ is not a $(k, \ep)$-RF is at most
$$\binom{n}{k}\frac{N}{K} \hspace{.05in} 2^{M-\lfloor \Omega\left(K\ep^2\right) \rfloor} = 2^{-\Omega\left(K\ep^2\right)}$$
provided $K \geq \max\{\log{(\frac{N}{K})}, \log{\binom{n}{k}}\} c/ \ep^2$ for a sufficiently large constant $c$ and $M \leq c\ii K\ep^2$ for a sufficiently small constant $c\ii$.  Taking logarithms gives the result.
\end{proof}

The $\max\{\log{(n-k)},\log\log{\binom{n}{k}}\}$ term in the statement of Theorem~\ref{thm:RFexistence} is always at most $\log{n}$, so the theorem always holds when $k \geq \log{n} + 2\log{(1/\ep)} + O(1)$, as discussed earlier. In the following theorem, we prove a limitation on the extraction properties of random functions which shows that this bound on $k$ is in fact nearly tight.

\begin{theorem}
\label{thm:RFnonexistence}
 There is a constant $c$ such that for every $n \in \N$, $k \in [n]$, and $\ep \in [0, 1/2]$ satisfying $k \leq \log{(n-k)} + 2\log{(1/\ep)} - c$, a random function $f \colon \B^n \rightarrow \B$ will fail to be a deterministic $\ep$-extractor for $(n,k)$-OBFSs with probability at least $1 - 2^{-\sqrt{N/K}}$, where $N = 2^n$ and $K = 2^k$.
\end{theorem}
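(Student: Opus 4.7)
The plan is to expose many disjoint oblivious bit-fixing sources and use independence, combined with the partial converse of Chernoff (Lemma~\ref{lem:ChernoffConverse}), to show that at least one of them is very likely to exhibit a biased output.

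First, fix an arbitrary set $L \in \subsets{n}{n-k}$ of ``fixed'' coordinates. For each $a \in \B^{n-k}$, consider the source $L^{a,n}$. The supports of these sources partition $\B^n$, so the $2^{n-k} = N/K$ families of values $\{f(w) : w \in \mathrm{supp}(L^{a,n})\}$ are pairwise disjoint. Since $f$ is chosen uniformly at random, the bits $\{f(w)\}_w$ are independent, and hence the random quantities $p_a := \Pr[f(L^{a,n})=1]$, as $a$ ranges over $\B^{n-k}$, are mutually independent random variables (with respect to the randomness of $f$), each equal to the mean of $K$ independent unbiased coin flips. Because $m = 1$, the statistical distance $\Delta(f(L^{a,n}),\Un{1})$ is exactly $|p_a - 1/2|$.

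Second, by Lemma~\ref{lem:ChernoffConverse} applied to the $K$ independent coin flips determining $p_a$, we have $\Pr[|p_a - 1/2| > \ep] \geq 2^{-\lceil O(K\ep^2)\rceil}$ (absorbing the strict-vs-weak inequality into the constant by shrinking $\ep$ slightly, which only changes the additive constant $c$). Thus $f$ is an $\ep$-extractor on every source of the form $L^{a,n}$ with probability at most
\[
\left(1 - 2^{-O(K\ep^2)}\right)^{N/K} \leq \exp\!\left(-(N/K)\cdot 2^{-O(K\ep^2)}\right).
\]
For this quantity to be at most $2^{-\sqrt{N/K}}$, it suffices that $(N/K) \cdot 2^{-O(K\ep^2)} \geq \sqrt{N/K}$, equivalently $K\ep^2 \leq \frac{1}{O(1)}\log(N/K) = \frac{1}{O(1)}(n-k)$. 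Taking logarithms and choosing $c$ large enough, this is exactly the hypothesis $k \leq \log(n-k) + 2\log(1/\ep) - c$.

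The observation driving the argument is that fixing the coordinates outside $L$ and varying $a$ gives a huge collection of OBFSs whose underlying ``experiments'' on $f$ are fully independent, so a standard union/product over independent rare events yields a doubly exponential failure probability. No step is really the main obstacle; the only mild subtlety is lining up the partial Chernoff converse (which requires unbiased coin flips, satisfied here because $f$ is uniformly random on $\B^n$ and $m=1$) with the statistical distance in the OBFS setting, and verifying that the final calculation produces the stated regime for $k$.
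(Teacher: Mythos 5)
Your proposal is correct and follows essentially the same route as the paper's proof: fix one set $L$ of fixed positions, observe that the $N/K$ sources $L^{a,n}$ are disjoint so the biases $p_a$ are independent, apply Lemma~\ref{lem:ChernoffConverse} to each, and multiply. The only differences are cosmetic (you use $(1-x)^{N/K}\le e^{-x N/K}$ where the paper does a small case split, and you explicitly patch the strict-versus-weak inequality issue that the paper glosses over), so nothing further is needed.
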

\begin{proof}
 Fix an input size $n$ and a set $L$ of $n-k$ fixed bits (say, $L = [n-k]$).  To say that $f$ an $\ep$-extractor for $(n,k)$-OBFSs is to say that all $2^{n-k}$ sets $S$ of the form $L^{*,n}$ satisfy the following condition.
$$\left|\Pr_{w \leftarrow S} \left[ f(w) = 1 \right] - \frac{1}{2} \right| \leq \ep$$
Since $f(w)$ is chosen independently for each string $w \in S$, we can use the converse of our Chernoff bound (Lemma~\ref{lem:ChernoffConverse}) to say that the probability that $f$ satisfies this condition for a fixed set $S$ is at most $1 - 2^{-\lceil O(K\ep^2) \rceil}$, where $K = 2^k = |S|$.

Since there are $N/K$ subsets of the form $L^{*,n}$ and they are disjoint, the probability that $f$ will fail the above condition on none of them (i.e. the probability that $f$ is a resilient function) is at most
$$\left( 1-2^{- \lceil O \left( K\ep^2 \right) \rceil} \right)^{N/K}$$
If the $O \left( K\ep^2 \right)$ term is less than or equal to $1$, this probability is at most $2^{-N/K}$. Otherwise, it is at most  $2^{-\sqrt{N/K}}$ provided that $N/K \geq 2^{CK\ep^2}$ for a sufficiently large constant $C = 2^c$.  Taking logarithms twice completes the proof.
\end{proof}

Theorem~\ref{thm:RFnonexistence} does not establish that extractors for OBFSs with the stated parameters do not exist; indeed, as mentioned earlier, the parity function (i.e. $f(x_1,\ldots,x_n) = \oplus{x_i}$) is a perfect resilient function for even $k=1$. What the theorem does show, however, is that $k \approx \log{n}$ represents a critical point below which these extractors become very rare. This seems consistent with the lower bound on $k$ proven in Theorem~\ref{thm:result}.

\subsection{Exposure-resilient functions}
\label{sec:ProbabilisticMethodERFs}

We now show that probabilistically constructing exposure-resilient functions is easier than constructing extractors for OBFSs. This is because, while the adversary can choose input sources in the extractor setting, here it can only expose them. The probabilistic constructions of static and adaptive ERFs both proceed by counting the number of adversaries that must be fooled and then applying Lemma~\ref{lem:foolOneAdversary} (below), which is an upper bound on the probability that a randomly chosen function will fail to fool a fixed adversary.  This lemma applies equally both to static and adaptive adversaries; the difference in achievable parameters between static and adaptive ERFs therefore stems solely from the fact that there are many more adversaries in the adaptive setting.

\begin{lemma}
\label{lem:foolOneAdversary}
 Let $A \colon \B^n \rightarrow \B^*$ be an algorithm that reads at most $d$ bits of its input, let $\ep > 0$, and choose a function $f \colon \B^n \rightarrow \B^m$ uniformly at random with $m = n-d - 2\log{(1/\ep)} - O(1)$.  Then $f$ will fail to satisfy
 $$\dpr{A\left(\Un{n}\right)}{f\left(\Un{n}\right)} \approx_\ep \dpr{A\left(\Un{n}\right)}{\Un{m}}$$
 with probability at most $2^{-\Omega(N\ep^2)}$, where $N = 2^n$.
\end{lemma}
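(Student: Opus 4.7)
The plan is a straightforward Chernoff plus union bound argument over tests, exploiting the fact that when $f \colon \B^n \to \B^m$ is chosen uniformly at random the values $\{f(w)\}_{w \in \B^n}$ are mutually independent. First, I would invoke Lemma~\ref{lem:AERFsimplification} to replace $A$ with the simpler adversary $A_r$ that outputs the bits it reads in the order it reads them, padded with zeros to length exactly $d$. After this replacement, $A \colon \B^n \to \B^d$ is a deterministic function whose output distribution $A(\Un{n})$ is uniform on $\B^d$, and whose preimages $C_y = A^{-1}(y)$ partition $\B^n$ into $2^d$ cells of size exactly $2^{n-d}$ each.

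Next, I would fix an arbitrary distinguishing test $T \subseteq \B^d \times \B^m$ and write $T_y = \{z \in \B^m \colon (y,z) \in T\}$. A direct expansion gives
\begin{align*}
\Pr[(A(\Un{n}), f(\Un{n})) \in T] &= \frac{1}{N} \sum_{w \in \B^n} \mathbf{1}[f(w) \in T_{A(w)}], \\
\Pr[(A(\Un{n}), \Un{m}) \in T] &= \frac{1}{N} \sum_{w \in \B^n} \frac{|T_{A(w)}|}{2^m}.
\end{align*}
Over the random choice of $f$, the $N$ indicators $\mathbf{1}[f(w) \in T_{A(w)}]$ are mutually independent (since the values $f(w)$ are independent across distinct $w$), each lies in $[0,1]$, and the second display is exactly the expectation of the first. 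Lemma~\ref{lem:Chernoff} therefore bounds the probability that these two quantities differ by more than $\ep$ by $2^{-\Omega(N\ep^2)}$.

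Finally, I would take a union bound over all $2^{2^{d+m}}$ possible tests $T \subseteq \B^d \times \B^m$. The hypothesis $m \leq n - d - 2\log(1/\ep) - O(1)$ yields $2^{d+m} \leq c \cdot N\ep^2$ for as small a constant $c > 0$ as desired (by choosing the hidden additive constant in $m$ large enough), so the union bound total becomes $2^{2^{d+m} - \Omega(N\ep^2)} = 2^{-\Omega(N\ep^2)}$, matching the claimed failure probability.

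There is no real obstacle in this argument; the only subtlety worth flagging is resisting the natural temptation to analyze each cell $C_y$ separately (bounding $\Delta(f(\Un{C_y}), \Un{m}) \leq \ep$ with a Chernoff bound on $2^{n-d}$ indicators and then union-bounding over the $2^d$ cells), which gives a quantitatively weaker failure bound. Pooling all $N$ inputs into a single Chernoff sum, and only then union-bounding over tests, is what yields the stronger $2^{-\Omega(N\ep^2)}$ probability stated in the lemma.
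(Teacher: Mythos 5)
Your proposal is correct and takes essentially the same route as the paper's proof: reduce to the bit-reporting adversary via Lemma~\ref{lem:AERFsimplification}, apply the Chernoff bound of Lemma~\ref{lem:Chernoff} to the $N$ independent indicators for a fixed test $T \subseteq \B^d \times \B^m$, and union-bound over all $2^{2^{d+m}}$ tests using the choice of $m$. The only cosmetic difference is that you compare directly against $\Pr[(A(\Un{n}),\Un{m}) \in T]$ as the mean, whereas the paper uses the regularity of the simplified adversary to identify this benchmark with $|T|/2^{d+m}$; the arguments are otherwise identical.
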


\begin{proof}
Lemma~\ref{lem:AERFsimplification} allows us to assume without loss of generality that $A$ adaptively reads $d$ bits and outputs them in the order that they were read.  Under this assumption, we have $\dpr{A(\Un{n})}{\Un{m}} = \Un{d+m}$.  We therefore need only to bound the probability that $\dpr{A(\Un{n})}{f(\Un{n})}$ is far from $\Un{d+m}$.

Fix a statistical test $T \subset \B^d \times \B^m$.  In order for $\dpr{A(\Un{n})}{f(\Un{n})}$ to pass this specific test of uniformity, we need $f$ to satisfy
\begin{equation} \label{passOneTest}
\left| \Pr \left[ \dpr{A\left(\Un{n}\right)}{f\left(\Un{n}\right)} \in T \right] - \frac{|T|}{2^{d+m}} \right| \leq \ep
\end{equation}
For every $w \in \B^n$, define $I_w$ to be $1$ if $(A(w),f(w)) \in T$ and $0$ otherwise, and notice that $\Pr [ \dpr{A(\Un{n})}{f(\Un{n})} \in T ] = \frac{1}{2^n}\sum_w{I_w}$.  For $x \in \B^d$, let $T_x$ denote $T \cap ( \{x\} \times \B^m )$.  Then, for a fixed $w$, the expectation of $I_w$ over the choice of $f$ is exactly $|T_{A(w)}| / 2^m$, and so by the regularity of $A$ the expectation of $\frac{1}{2^n}\sum_w{I_w}$ over the choice of $f$ is $|T|/2^{d+m}$.  A Chernoff bound (Lemma~\ref{lem:Chernoff}) then gives that the probability over the choice of $f$ that Equation~(\ref{passOneTest}) is not satisfied is at most $2^{-\lfloor \Omega(N\ep^2) \rfloor}$.

Since there are $2^{DM}$ possible choices of $T$ in the above analysis (where $D = 2^d$, $M=2^m$), a union bound shows that the probability that $\dpr{A(\Un{n})}{f(\Un{n})}$ will fail one or more of them is at most $2^{DM}2^{-\lfloor \Omega(N\ep^2) \rfloor} = 2^{-\Omega(N\ep^2)}$ if $m = n - d - 2\log{(1/\ep)} - c$ for a sufficiently large constant $c$.
\end{proof}

Having established that a random function will tend to fool a fixed adversary, we now establish the existence of static and adaptive exposure-resilient functions. In both cases, we do so by taking a union bound over all potential adversaries and applying Lemma~\ref{lem:foolOneAdversary}. Thus, the parameters achieved are those that bring the number of adversaries to below $2^{N\ep^2}$.

\begin{theorem}
\label{thm:staticERFexistence}
 For every $n \in \N$, $k \in [n]$, and $\ep \geq c\sqrt{n/2^n}$ where $c$ is a universal constant, a randomly chosen function $f \colon \B^n \rightarrow \B^m$ with $m \leq k - 2\log{(1/\ep)} - O(1)$ is a static $(k,\ep)$-ERF with probability at least $1 - 2^{-\Omega(N \ep^2)}$, where $N = 2^n$.
\end{theorem}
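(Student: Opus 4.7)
The plan is to prove this by a straightforward union bound combining Lemma~\ref{lem:foolOneAdversary} with the (small) count of distinct static adversaries. A static $(k,\ep)$-ERF must, by definition, fool every adversary $A_L$ that is specified by some $L \in \subsets{n}{n-k}$ and simply reads the $n-k$ input bits indexed by $L$. Each such $A_L$ reads exactly $d = n-k$ bits, so Lemma~\ref{lem:foolOneAdversary} applies directly with this value of $d$; the hypothesis $m \leq n - d - 2\log(1/\ep) - O(1)$ becomes $m \leq k - 2\log(1/\ep) - O(1)$, matching the statement of the theorem.

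First I would fix an $L \in \subsets{n}{n-k}$ and invoke Lemma~\ref{lem:foolOneAdversary} on the associated adversary $A_L$ to conclude that a uniformly random $f \colon \B^n \to \B^m$ fails to satisfy $\dpr{A_L(\Un{n})}{f(\Un{n})} \approx_\ep \dpr{A_L(\Un{n})}{\Un{m}}$ with probability at most $2^{-\Omega(N\ep^2)}$. Next I would take a union bound over all $L \in \subsets{n}{n-k}$; since $\binom{n}{n-k} \leq 2^n$, the probability that $f$ fails to be a static $(k,\ep)$-ERF is at most
$$\binom{n}{n-k} \cdot 2^{-\Omega(N\ep^2)} \leq 2^{n - \Omega(N\ep^2)}.$$
Finally, I would argue that the assumption $\ep \geq c\sqrt{n/N}$ (for a sufficiently large universal constant $c$) ensures that $N\ep^2 \geq c^2 n$ dominates the $n$ from the union bound, so the exponent remains $-\Omega(N\ep^2)$ after absorbing the union-bound loss into the hidden constant.

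No real obstacle is anticipated: the proof is essentially mechanical, with Lemma~\ref{lem:foolOneAdversary} doing all the heavy lifting. The only subtlety is the correct counting of static adversaries (which is just $\binom{n}{n-k}$, since a static adversary is fully specified by the set of bits it reads, cf.\ Lemma~\ref{lem:ERFalternate}), and the verification that the lower bound on $\ep$ is precisely what is needed so that the $2^n$ union bound cost is absorbed into the $2^{-\Omega(N\ep^2)}$ failure probability. This accounts for the $\sqrt{n/2^n}$ threshold that appears in the theorem's hypothesis.
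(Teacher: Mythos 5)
Your proposal is correct and follows essentially the same route as the paper: apply Lemma~\ref{lem:foolOneAdversary} with $d=n-k$, union bound over the $\binom{n}{n-k}=\binom{n}{k}\leq N$ static adversaries, and use the hypothesis $\ep \geq c\sqrt{n/N}$ to absorb the union-bound factor into the $2^{-\Omega(N\ep^2)}$ failure probability.
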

\begin{proof}
 Every static adversary that tries to distinguish the output of $f$ from uniform is an algorithm $A \colon \B^n \rightarrow \B^{n-k}$ that reads exactly $n-k$ bits of its input.  We can therefore apply Lemma~\ref{lem:foolOneAdversary} with $d = n-k$ to get that the probability that $f$ will fail to fool any one adversary is at most $2^{-\Omega(N\ep^2)}$.  Taking a union bound over the $\binom{n}{k}$ possible adversaries, we get that the probability that $f$ will not fool all adversaries is at most
$$ \binom{n}{k}2^{-\Omega\left(N\ep^2\right)} \leq N 2^{-\Omega\left(N\ep^2\right)} = 2^{-\Omega\left(N\ep^2\right)}$$
where the final equality is given by the constraint on $\ep$.
\end{proof}

Counting the number of adversaries in the adaptive setting is a bit more work, but Lemma~\ref{lem:AERFsimplification} from our preliminaries simplifies this task.
\begin{theorem}
\label{thm:adaptiveERFexistence}
 For every $n \in \N$, $k \in [n]$, and $\ep > 0$, a randomly chosen function $f \colon \B^n \rightarrow \B^m$ with $m \leq k - 2\log{(1 / \ep)} - O(1)$ and $k \geq \log{\log{n}} + 2\log{(1 / \ep)} + O(1)$ is an adaptive $(k,\ep)$-ERF with probability at least $1 - 2^{-\Omega(N \ep^2)}$, where $N = 2^n$.
\end{theorem}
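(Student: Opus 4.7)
The plan is to mimic the structure of the proof of Theorem~\ref{thm:staticERFexistence}: apply Lemma~\ref{lem:foolOneAdversary} to bound the probability that a random $f$ fails to fool a single adaptive adversary, then take a union bound over all such adversaries. The only genuinely new ingredient is counting the number of adaptive adversaries, which is where Lemma~\ref{lem:AERFsimplification} does the heavy lifting.

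First I would invoke Lemma~\ref{lem:AERFsimplification} to restrict attention to adversaries $A_r$ that simply output the bits they read in the order they are read. Such an $A_r$ reading $d = n-k$ bits is specified exactly by a full binary decision tree of depth $n-k$ whose internal nodes are labelled by indices in $[n]$ (the bit to query next) and whose leaves are unlabelled (the output is determined by the sequence of answers along the root-to-leaf path). A full binary tree of depth $n-k$ has $2^{n-k}-1$ internal nodes, each of which can be labelled in at most $n$ ways, so the number of such adversaries is at most $n^{2^{n-k}}$.

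Next, for each such adversary, Lemma~\ref{lem:foolOneAdversary} (with $d = n-k$ and $m \leq n - d - 2\log(1/\ep) - O(1) = k - 2\log(1/\ep) - O(1)$) gives that a random $f$ fails to satisfy
$$ \dpr{A_r(\Un{n})}{f(\Un{n})} \approx_\ep \dpr{A_r(\Un{n})}{\Un{m}} $$
with probability at most $2^{-\Omega(N\ep^2)}$. A union bound over adversaries yields failure probability at most
$$ n^{2^{n-k}} \cdot 2^{-\Omega(N\ep^2)} = 2^{\,2^{n-k}\log n \,-\, \Omega(N\ep^2)}. $$
For this exponent to remain $-\Omega(N\ep^2)$, it suffices that $2^{n-k}\log n \leq \tfrac{1}{2}\Omega(N\ep^2)/1$, i.e., $\log n \leq \Omega(2^k \ep^2)$, which is exactly the hypothesis $k \geq \log\log n + 2\log(1/\ep) + O(1)$.

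The main (and only) subtlety is the adversary count: the naive parametrization of an adaptive strategy is huge, and without Lemma~\ref{lem:AERFsimplification} one would have to union-bound over all functions $A\colon \B^n \to \B^*$ with the right query complexity, which is far too many to beat $2^{N\ep^2}$. The observation that it suffices to fool the ``raw transcript'' adversaries $A_r$ collapses the count to $n^{2^{n-k}}$ and makes the $\log\log n$ threshold fall out cleanly.
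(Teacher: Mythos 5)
Your proposal is correct and follows essentially the same route as the paper's proof: reduce to raw-transcript adversaries via Lemma~\ref{lem:AERFsimplification}, count them as decision trees labelled by $[n]$ (your bound $n^{2^{n-k}}$ versus the paper's $n^{N/2K}$ differs only by an immaterial constant factor in the exponent), and union-bound against Lemma~\ref{lem:foolOneAdversary}, with the constraint $\log n \leq O(2^k\ep^2)$ yielding the $k \geq \log\log n + 2\log(1/\ep) + O(1)$ threshold exactly as in the paper.
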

\begin{proof}
 The proof is identical to that of Theorem~\ref{thm:staticERFexistence} except that we have to count the number of adaptive adversaries.  We do so below.

 First we note that Lemma~\ref{lem:AERFsimplification} implies that if $f$ fools all adaptive adversaries that output the bits they read as they read them, then $f$ fools all adaptive adversaries.  We therefore only need to count this smaller set of adversaries.  The process by which such an adversary chooses which bits to request can be modelled by a decision tree of depth $n-k-1$ whose internal nodes are labelled by elements of $[n]$. Since the number of nodes in such a tree is $2^{n-k-1}-1 < N/2K$, where $N=2^n$ and $K=2^k$, we can bound the total number of trees---and therefore adversaries---by $n^{N/2K}$.

Proceeding with the same kind of union bound as in the proof of Theorem~\ref{thm:staticERFexistence}, we see that the probability that $f$ will not fool all adaptive adversaries is at most $n^{N/2K}2^{-\Omega(N\ep^2)} = 2^{-\Omega(N\ep^2)}$, provided that $K \geq (c\log{n})/\ep^2$ for a sufficiently large constant $c$.  Taking logarithms yields the theorem.
\end{proof}

\section{Future work}
 The general question of whether there exist resilient functions with large output length in the low-entropy range studied here is still unresolved.
 \begin{question}
 Does there exist, for all $n \in \N$ and some growing function $0 < k(n) < \log{n}$, a deterministic $\ep$-extractor for $(n, k(n)$-OBFSs with output length $m = \Omega(k(n))$ and $\ep$ constant?
 \end{question}
 Theorem~\ref{thm:result} shows that to resolve this question in the positive direction requires a function that is either not computable by a forgetless streaming algorithm or uses a considerable amount of space. In the other direction, an interesting step towards a negative result would be to at least remove the forgetlessness condition from the space lower bound proven in that theorem.

 We can ask an analogous question for the case of adaptive ERFs with $k < \log{\log{n}}$.
 \begin{question}
 Does there exist, for all $n \in \N$ and some growing function $0 < k(n) < \log{\log{n}}$, an adaptive $(k(n),\ep)$-ERF with output length $m = \Omega(k(n))$ and $\ep$ constant?
 \end{question}
 In this case, we cannot even rule out the possibility that a more clever use of the probabilistic method will resolve this question positively. Thus, a first step toward a negative result might be to prove an analogue to Theorem~\ref{thm:RFnonexistence} that shows that adaptive ERFs with near-optimal output length become very rare when $k < \log{\log{n}}$.

 A third open problem arising from this work is that of finding an explicit construction of a static ERF with the parameters achieved using the probabilistic method in Theorem~\ref{thm:staticERFexistence}. Currently, an output length of $\Omega(k)$ is achieved in~\citep{DodisSahaiSmith} using strong extractors, but the construction works only when $k > \log{n}$. For $k$ smaller than $\log{n}$, there is no known construction of a static ERF that is not also an RF, making the construction of Theorem~\ref{thm:result} the current state of the art. This leaves us with the following open question:
 \begin{question}
 Does there exist, for all $n \in \N$ and some growing function $0 < k(n) < \log{n}$, an explicit static $(k(n), \ep)$-ERF with output length $m = \Omega(k(n))$ and $\ep$ constant?
 \end{question}

\bibliographystyle{alphanum}
\bibliography{Bibliography}

\newcommand{\etalchar}[1]{$^{#1}$}
\begin{thebibliography}{CGH{\etalchar{+}}}

\bibitem[BBR]{BennettBrRo88}
Charles~H. Bennett, Gilles Brassard, and {Jean-Marc} Robert.
\newblock Privacy amplification by public discussion.
\newblock {\em SIAM Journal on Computing}, 17(2):210--229, April 1988.

\bibitem[BKS{\etalchar{+}}]{BarakKiShSuWi05}
Boaz Barak, Guy Kindler, Ronen Shaltiel, Benny Sudakov, and Avi Wigderson.
\newblock Simulating independence: new constructions of condensers, {R}amsey
  graphs, dispersers, and extractors.
\newblock In {\em S{TOC}'05: {P}roceedings of the 37th {A}nnual {ACM}
  {S}ymposium on {T}heory of {C}omputing}, pages 1--10. ACM, New York, 2005.

\bibitem[BRST]{BarYossefTrevisanReingoldShaltiel}
Z.~{Bar-Yossef}, O.~Reingold, R.~Shaltiel, and L.~Trevisan.
\newblock Streaming computation of combinatorial objects.
\newblock In {\em Proceedings of 17th Annual IEEE Conference on Computational
  Complexity (CCC '02)}, pages 165--174, 2002.

\bibitem[CDH{\etalchar{+}}]{CanettiEtAl}
R.~Canetti, Y.~Dodis, S.~Halevi, E.~Kushilevitz, and A.~Sahai.
\newblock Exposure-resilient functions and all-or-nothing transforms.
\newblock In {\em Advances in Cryptography -- EUROCRYPT 2000}, volume
  1807/2000, pages 453--469, 2000.

\bibitem[CG]{ChorGoldreich88}
B.~Chor and O.~Goldreich.
\newblock Unbiased bits from sources of weak randomness and probabilistic
  communication complexity.
\newblock {\em SIAM Journal on Computing}, 17(2):230--261, 1988.

\bibitem[CGH{\etalchar{+}}]{ChorGoHaFrRuSm85}
Benny Chor, Oded Goldreich, Johan Hastad, Joel Friedman, Steven Rudich, and
  Roman Smolensky.
\newblock The bit extraction problem or {t}-resilient functions (preliminary
  version).
\newblock In {\em 26th Annual Symposium on Foundations of Computer Science},
  pages 396--407, Portland, Oregon, 21--23 October 1985. IEEE.

\bibitem[Dia]{Diaconis}
P.~Diaconis.
\newblock Group representations in probability and statistics.
\newblock In {\em Lecture Notes--Monograph Series 11, Institute of Mathematical
  Statistics}, 1988.
\newblock Hayward, CA.

\bibitem[DSS]{DodisSahaiSmith}
Y.~Dodis, A.~Sahai, and A.~Smith.
\newblock On perfect and adaptive security in exposure-resilient cryptography.
\newblock In {\em Advances in Cryptography -- EUROCRYPT 2001}, volume 2045,
  pages 301--324, 2001.

\bibitem[GRS]{GabizonRazShaltiel}
A.~Gabizon, R.~Raz, and R.~Shaltiel.
\newblock Deterministic extractors for bit-fixing sources by obtaining an
  independent seed.
\newblock {\em SIAM Journal on Computing}, 36(4):1072--1094, 2006.

\bibitem[KZ]{KampZuckerman}
J.~Kamp and D.~Zuckerman.
\newblock Deterministic extractors for bit-fixing sources and
  exposure-resilient cryptography.
\newblock {\em SIAM Journal on Computing}, 36(5):1231--1247, 2003.

\bibitem[Lub]{Lubotzky}
A.~Lubotzky.
\newblock {\em Discrete Groups, Expanding Graphs and Invariant Measures}.
\newblock Birkhauser, Basel, 1994.

\bibitem[NZ]{NisanZu96}
Noam Nisan and David Zuckerman.
\newblock Randomness is linear in space.
\newblock {\em Journal of Computer and System Sciences}, 52(1):43--52, February
  1996.

\bibitem[Res]{Reshef}
Y.~Reshef.
\newblock On resilient and exposure-resilient functions.
\newblock Undergraduate Thesis, Harvard College, 2009.

\bibitem[TV]{TrevisanVa00}
Luca Trevisan and Salil Vadhan.
\newblock Extracting randomness from samplable distributions (extended
  abstract).
\newblock In {\em 41st {A}nnual {S}ymposium on {F}oundations of {C}omputer
  {S}cience ({R}edondo {B}each, {CA}, 2000)}, pages 32--42. IEEE Comput. Soc.
  Press, Los Alamitos, CA, 2000.

\bibitem[Vaz]{Vazirani87a}
Umesh~V. Vazirani.
\newblock Strong communication complexity or generating quasirandom sequences
  from two communicating semirandom sources.
\newblock {\em Combinatorica}, 7(4):375--392, 1987.

\end{thebibliography}

\appendix
\section[Proof sketch of Lemma 3.2]{Proof sketch of Lemma~\ref{lem:ChernoffConverse}}
\begin{lemma*}
Let $\pr{X_1}, \ldots, \pr{X_t}$ represent the results of independent, unbiased coin flips, and let $\pr{X} = ( \sum_i{\pr{X_i}} )/t$.  Then for every $0 \leq \ep \leq 1/2$, we have
$$\Pr \left[ \left| \pr{X} - \frac{1}{2} \right| \geq \ep \right] \geq 2^{-\lceil O\left(t\ep^2\right) \rceil}$$
\end{lemma*}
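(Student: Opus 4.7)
The plan is to reduce to a standard anti-concentration estimate for the fair binomial distribution and apply Stirling's approximation. Since the $X_i$ are independent unbiased $\{0,1\}$-valued variables, the sum $S = \sum_{i=1}^t X_i$ is $\mathrm{Binomial}(t, 1/2)$-distributed, and $\Pr[|X - 1/2| \geq \epsilon] = \Pr[|S - t/2| \geq t\epsilon]$. It therefore suffices to lower bound this tail probability by $2^{-\lceil O(t\epsilon^2)\rceil}$.

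The core ingredient is a lower bound on a single binomial coefficient near the mean. By Stirling's formula, $\binom{t}{\lfloor t/2\rfloor} = \Theta(2^t/\sqrt{t})$. Writing
$$\frac{\binom{t}{\lfloor t/2\rfloor + k}}{\binom{t}{\lfloor t/2\rfloor}} = \prod_{i=1}^{k}\frac{\lfloor t/2\rfloor + 1 - i}{\lfloor t/2\rfloor + i}$$
and using $\log(1-x) \geq -x - x^2$ for $x \in [0, 1/2]$, this ratio is at least $e^{-c_1 k^2 / t}$ for some absolute constant $c_1$, provided $k \leq t/4$. Hence for such $k$,
$$\Pr[S = \lfloor t/2\rfloor + k] \;\geq\; \frac{c_2}{\sqrt{t}}\, e^{-c_1 k^2/t}$$
for an absolute constant $c_2>0$.

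To finish, I set $k_0 = \lceil t\epsilon \rceil$ and lower bound $\Pr[|S - t/2| \geq t\epsilon]$ by a partial sum of these single-point probabilities over $k \in [k_0, k_0 + \lfloor\sqrt{t}\rfloor]$ (keeping $k \leq t/4$ using $\epsilon \leq 1/2$; if $\epsilon$ is close to $1/2$, a separate trivial bound $\Pr[S \in \{0,t\}] = 2^{1-t}$ handles the case). For every $k$ in this window, $k^2/t \leq c_3(t\epsilon^2 + 1)$, so each term is at least $(c_2/\sqrt{t})\,e^{-c_4(t\epsilon^2+1)}$, and there are either $\Omega(\sqrt{t})$ such terms or (if $t\epsilon^2 \geq \log t$) the single-term bound already suffices because $1/\sqrt{t} \geq 2^{-t\epsilon^2}$. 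In either case the sum is at least $c_5 \, e^{-c_6 t\epsilon^2}$, which is $2^{-O(t\epsilon^2)}$; absorbing the additive $O(1)$ into the ceiling yields the claimed form $2^{-\lceil O(t\epsilon^2)\rceil}$.

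The main obstacle is the intermediate regime $1 \leq t\epsilon^2 \leq O(\log t)$: here a single Stirling estimate is too weak because the $1/\sqrt{t}$ prefactor alone is smaller than the target $2^{-O(t\epsilon^2)}$, and the solution is precisely to sum $\Theta(\sqrt{t})$ adjacent binomial probabilities so that the $\sqrt{t}$ cancels the prefactor while the Gaussian-shaped Stirling term stays within $e^{-O(t\epsilon^2)}$. Once this is set up, tracking constants is routine bookkeeping and the proof sketch in the appendix need only record the reduction, the Stirling bound on a single coefficient, and the windowed sum.
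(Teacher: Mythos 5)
Your treatment of the moderate- and large-deviation regimes is essentially the paper's own: a Stirling lower bound on binomial coefficients at distance $\Theta(t\ep)$ from the mean (the paper takes $\lfloor \ep t\rfloor\binom{t}{\lfloor \beta t\rfloor}2^{-t}$ with $\beta=\tfrac12+2\ep$; your window of $\Theta(\sqrt t)$ point masses is an equivalent way to kill the $1/\sqrt t$ prefactor), plus the trivial $2^{-t}$ bound once $\ep$ is a constant. (A minor slip: your window $[k_0,k_0+\lfloor\sqrt t\rfloor]$ already exits $[0,t/4]$ when $\ep$ is near $1/4$, not $1/2$; this is harmless, since for any constant $\ep$ we have $t\ep^2=\Theta(t)$ and the trivial bound applies.)

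The genuine gap is the last step, ``absorbing the additive $O(1)$ into the ceiling,'' and it fails exactly in the regime $t\ep^2=O(1)$, i.e.\ $\ep=O(1/\sqrt t)$. For $0<Ct\ep^2\le 1$ the claimed bound $2^{-\lceil Ct\ep^2\rceil}$ equals $1/2$, and enlarging the constant $C$ cannot change this: the ceiling pins the target at $1/2$, so a bound of the form $c_5e^{-c_6t\ep^2}$ with an unspecified constant $c_5<1/2$ does not imply the lemma as stated. Your one-sided window cannot be pushed to $1/2$ either: even computed exactly, $\Pr[t\ep\le S-t/2\le t\ep+\sqrt t]$ is below $1/2$ (about $\Pr[0\le Z\le 2]\approx 0.48$ in the Gaussian limit), and the Stirling constants only lose more. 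This is not a cosmetic issue, because the ceiling is used downstream in the proof of Theorem~\ref{thm:RFnonexistence}, where the case ``the $O(K\ep^2)$ term is at most $1$'' requires the per-source failure probability to be at least $1/2$. The paper closes this regime with a separate case, $0\le\ep<\frac{1}{4\sqrt t}$: the central window $\{|S-t/2|<t\ep\}$ contains at most roughly $\sqrt t/2+1$ integer points, each of probability at most $\sqrt{2/\pi t}$, so its total mass is below $1/2$ and hence the tail event has probability at least $1/2$. Adding this anticoncentration case (with a direct check for small $t$) repairs your argument; as written, you have proved only the weaker bound $2^{-O(t\ep^2+1)}$.
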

\begin{proof}[\textit{Proof Sketch}]
We address three separate cases: $0 \leq \ep < \frac{1}{4\sqrt{t}}$, $\frac{1}{4\sqrt{t}} \leq \ep < \frac{1}{5}$, and $\frac{1}{5} \leq \ep \leq \frac{1}{2}$. In the first case, we upper-bound the probability that $\left| \pr{X} - \frac{1}{2} \right| < \ep$ using the fact that no term of the binomial distribution exceeds $\sqrt{2 / \pi t}$ in probability mass. In the second case, we set $\beta = \frac{1}{2} + 2\ep$ and use Stirling's approximation to lower-bound the probability by
\begin{eqnarray*}
\lfloor \ep t \rfloor \cdot \binom{t}{\lfloor \beta t \rfloor} / 2^t &\geq& \lfloor \ep t \rfloor \cdot
        \frac{1}{\sqrt{t}}
        \frac{1}{2^t \beta^{\beta t} (1-\beta)^{(1-\beta)t}} \\
        &\geq& 2^{-O(t\ep^2)}
\end{eqnarray*}
where the first inequality is from Stirling's approximation. In the third case, we just lower-bound the probability by $2^{-t}$.
\end{proof}

\section[Proof of Claim 4.2]{Proof of Claim~\ref{claim:Fourier}}
\begin{claim*}
  Let $W_k$ be the distribution on the vertices of $\Z{M}$ (where $M=2^m$) obtained by beginning at $0$ and adding $1$ or $0$ with equal probability $k$ times.  The distance from uniform of $W_k$ is at most
  $$ \frac{e^{-k\pi^2/2M^2}}{2\left(1-e^{-3k\pi^2/2M^2}\right)} $$
\end{claim*}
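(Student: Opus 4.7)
My plan is to prove this via discrete Fourier analysis on $\mathbb{Z}/M\mathbb{Z}$, exploiting the fact that $W_k$ is the mod-$M$ reduction of a sum of $k$ i.i.d.\ uniform $\{0,1\}$-valued random variables, so its Fourier coefficients factor as products over the increments. Concretely, writing $\hat{W_k}(j) := \mathbb{E}[e^{2\pi i j W_k/M}]$ and using independence of the increments gives $\hat{W_k}(j) = \bigl((1 + e^{2\pi i j/M})/2\bigr)^k = e^{\pi i j k/M}\cos^k(\pi j/M)$, so $|\hat{W_k}(j)| = |\cos(\pi j/M)|^k$. This magnitude estimate is the only fact about $W_k$ we will use.

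Next I would apply the standard $L^1$ Fourier bound on statistical distance: by Fourier inversion applied to the density difference $W_k(x) - 1/M$ together with the triangle inequality, one has $\Delta(W_k, U_M) \leq \tfrac{1}{2}\sum_{j \neq 0}|\hat{W_k}(j)|$. Exploiting the conjugate-pair symmetry $|\cos(\pi j/M)| = |\cos(\pi(M-j)/M)|$ together with the vanishing $\cos(\pi/2)=0$ for $M$ even, I would collapse the sum over $j \in \{1, \dots, M-1\}$ into a one-sided sum essentially of the form $\sum_{j=1}^{M/2-1}\cos^k(\pi j/M)$.

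For the analytic estimates, I would apply the elementary inequality $\cos(x) \leq e^{-x^2/2}$ on $[0, \pi/2]$ --- derivable from $(\ln\cos)'(x) = -\tan x \leq -x$ integrated from $0$ --- to obtain the pointwise bound $\cos^k(\pi j/M) \leq \lambda^{j^2}$, where $\lambda := e^{-k\pi^2/(2M^2)}$ is exactly the numerator appearing in the claim. The remaining super-geometric series $\sum_{j \geq 1}\lambda^{j^2}$ I would dominate by an ordinary geometric series using the inequality $j^2 - 1 \geq 3(j-1)$ valid for every integer $j \geq 1$; this yields $\sum_{j \geq 1}\lambda^{j^2} \leq \lambda \sum_{j \geq 0}\lambda^{3j} = \lambda/(1-\lambda^3)$, which is exactly the shape of the desired bound. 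The main obstacle is not conceptual but purely bookkeeping: tracking how the factor of $\tfrac{1}{2}$ from the definition of statistical distance, the factor of $2$ picked up when pairing conjugate Fourier indices, and the precise constants in the geometric-series bound combine to reproduce the exact expression $e^{-k\pi^2/2M^2}/[2(1 - e^{-3k\pi^2/2M^2})]$ stated in the claim.
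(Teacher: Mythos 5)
Your plan follows the same Fourier-analytic skeleton as the paper's proof (Fourier coefficients of the increment distribution, the estimate $\cos x\le e^{-x^2/2}$, and taming $\sum_j \lambda^{j^2}$ by a geometric series via $j^2-1\ge 3(j-1)$), and each individual step you list is sound. The problem is the final step you dismiss as ``purely bookkeeping'': it does not close, and in fact cannot close within your framework. Starting from the $L^1$ bound $\Delta(W_k,U_M)\le\tfrac12\sum_{j\ne 0}|\hat{W_k}(j)|$ with $|\hat{W_k}(j)|=|\cos(\pi j/M)|^k$, the factor $\tfrac12$ from the definition of statistical distance is exactly cancelled by the factor $2$ from pairing $j$ with $M-j$ (the $j=M/2$ term vanishes), so you are left with $\Delta\le\sum_{j=1}^{M/2-1}\cos^k(\pi j/M)\le\sum_{j\ge1}\lambda^{j^2}\le\lambda/(1-\lambda^3)$, where $\lambda=e^{-k\pi^2/2M^2}$. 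That is exactly \emph{twice} the bound asserted in the Claim; no rearrangement of the constants you mention produces the extra $\tfrac12$ in the denominator. The paper reaches the stated constant from a genuinely smaller starting quantity: it invokes Diaconis's upper bound lemma, whose intermediate expression is $\tfrac14\sum_{j\ne0}\bigl(\tfrac12+\tfrac12\cos(2\pi j/M)\bigr)^k=\tfrac14\sum_{j\ne0}\cos^{2k}(\pi j/M)$, i.e.\ the \emph{squared} Fourier coefficients with prefactor $\tfrac14$, and only then runs the same pointwise and geometric estimates; that is where the $2$ in the denominator comes from. Note also that even the Plancherel/Cauchy--Schwarz route, $\Delta\le\tfrac12\bigl(\sum_{j\ne0}|\hat{W_k}(j)|^2\bigr)^{1/2}$, yields only about $\lambda/\sqrt{2}$ for small $\lambda$, so the missing factor is not recoverable by sharpening your estimates---reproducing the Claim's exact constant requires the paper's specific use of the upper bound lemma (which, one should be aware, is stated there as a bound on the \emph{square} of the distance, so matching the stated constant requires care even on the paper's route).

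The silver lining is that the bound your argument does prove, $\Delta(W_k,U_M)\le\lambda/(1-\lambda^3)$, is all that Theorem~\ref{thm:KZ} actually needs: in that application $k\ge M^2$, so $1-\lambda^3\ge 1-e^{-3\pi^2/2}$ and the subsequent step bounding the error by $\ep^{\log(e)\pi^2/2}\le\ep$ goes through with ample slack. So either weaken the Claim to the version without the factor $2$ (and your proof is then complete as sketched), or switch to the paper's derivation via the upper bound lemma if you want the constant exactly as stated.
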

\begin{proof}
\Ynote{NEW MATERIAL}
Consider $\Z{M}$ as an additive group, and let $P$ be the probability distribution on $\Z{M}$ that equals $0$ with probability $1/2$ and $1$ otherwise. Then the distribution on $\Z{M}$ after $k$ steps of our random walk is $P^{*n}$, the $n$-th convolution of $P$ with itself.

Lemma 1 in Chapter 3 of~\citep{Diaconis} bounds the distance between $P^{*n}$ and the uniform distribution in terms of the traces of the Fourier transforms by $P^{*n}$ of the non-trivial irreducible representations of $\Z{M}$. This simplifies nicely since the Fourier transform $\widehat{P^{*n}}(\rho)$ of a representation $\rho$ by $P^{*n}$ equals $(\hat{P}(\rho))^n$, the $n$-th power of the Fourier transform of $\rho$ by $P$. Since there is one non-trivial irreducible representation for each $j \in [M-1]$, we therefore arrive at the following upper bound for the distance from uniform after $k$ random steps.
$$\frac{1}{4}\sum_{j=1}^{M-1}{\left( \frac{1}{2}+\frac{1}{2}\cos\left(\frac{2\pi j}{M}\right) \right)^k}$$

To bound this sum, we first note that $\frac{1}{2} + \frac{1}{2}\cos(x) \leq e^{-x^2/8}$ for $x \in [0, \pi]$. This, together with the fact that $M=2^m$ is even, allows us to write
\begin{eqnarray*}
 \frac{1}{4}\sum_{j=1}^{M-1}{\left( \frac{1}{2}+\frac{1}{2}\cos\left(\frac{2\pi j}{M}\right) \right)^k} &=& \frac{1}{2}\sum_{j=1}^{(M-2)/2}{\left( \frac{1}{2}+\frac{1}{2}\cos\left(\frac{2\pi j}{M}\right) \right)^k} \\
&\leq& \frac{1}{2}\sum_{j=1}^{(M-2)/2}{e^{-k\pi^2 j^2 /2M^2}} \\
&\leq& \frac{1}{2}e^{-k\pi^2/2M^2}\sum_{j=1}^{\infty}{e^{-k\pi^2 (j^2 - 1) /2M^2}} \\
&\leq& \frac{1}{2}e^{-k\pi^2/2M^2}\sum_{j=0}^{\infty}{e^{-3k\pi^2 j /2M^2}} \\
&=& \frac{e^{-k\pi^2/2M^2}}{2\left(1-e^{-3k\pi^2/2M^2}\right)}
\end{eqnarray*}
which is the desired result.
\end{proof}
\end{document}